\documentclass[12pt, english]{amsart}
\usepackage[utf8]{inputenc}
\usepackage{xcolor}
\usepackage[margin=1in]{geometry} 

\newcommand{\sk}[1]{{\color{green!50!black}#1}}

\usepackage{amsmath,amssymb,amsthm}
\usepackage{algorithm}      
\usepackage{algpseudocode}
\usepackage{subcaption}
\usepackage[export]{adjustbox}
\usepackage{soul}
\usepackage{graphicx}
\usepackage{caption}
\usepackage{tikz}
\usetikzlibrary{arrows.meta,positioning, decorations.pathreplacing}
\usepackage{tikz-cd}
\usetikzlibrary{decorations.markings}
\usepackage{url}
\usepackage{accents}
\usepackage{mathrsfs}
\usepackage[]{tikz-cd}
\usepackage[all]{xy}
\usepackage{multicol}
\usepackage{mathtools}

\usepackage{color}
\usepackage[symbol]{footmisc}

\usepackage{booktabs}

\theoremstyle{plain}

\newtheorem{theorem}{Theorem}[section]

\newtheorem{corollary}[theorem]{Corollary}
\newtheorem{lemma}[theorem]{Lemma}

\newtheorem{proposition}[theorem]{Proposition}

\newtheorem*{thm*}{Theorem}

\theoremstyle{definition}

\newtheorem{remark}[theorem]{Remark}

\renewcommand{\thefootnote}{\fnsymbol{footnote}}

\newcommand{\R}{\mathbb{R}}

\newcommand{\hrho}{\hat{\rho}}

\DeclareMathOperator{\SO}{SO}

\DeclareMathOperator{\Tr}{Tr}

\DeclareMathOperator{\Orth}{O}

\title{Provable orbit recovery over $\SO(3)$ from the non-uniform second moment}

\author{Tamir Bendory}
\author{Dan Edidin}
\author{Josh Katz\textsuperscript{$\ast$}}
\author{Shay Kreymer}
\author{Nir Sharon}

\date{\today}
\begin{document}

\begin{abstract}
We study the recovery of an unknown three-dimensional band-limited signal from multiple noisy observations that are randomly rotated by latent elements of $\SO(3)$, where the rotations are drawn from an unknown, non-uniform distribution. Because the rotations are unobserved, only the signal orbit under the rotation group can be recovered. We show that the signal orbit and the rotation distribution are jointly identifiable from the first and second moments. This yields an improved high-noise sample complexity that scales quadratically with the noise variance, rather than cubically as in the uniform-rotation case. We further develop a provable, computationally efficient reconstruction algorithm that recovers the 3-D signal by successively solving a sequence of well-conditioned linear systems. The algorithm is validated through extensive numerical experiments. Our results provide a principled and tractable framework for high-noise 3-D orbit recovery, with potential relevance to cryo-electron microscopy and cryo-electron tomography modeling, where molecules are observed in unknown orientations.
\end{abstract}

\maketitle

\renewcommand{\thefootnote}{$\ast$}
\footnotetext{Approved for Public Release; Distribution Unlimited. Public Release Case Number 26-0162. Affiliation with the MITRE Corporation is for identification purposes only and is not intended to convey or imply MITRE’s concurrence with, or support for, the positions, opinions, or viewpoints expressed by the
author. \copyright 2026 The MITRE Corporation. ALL RIGHTS RESERVED.}

\section{Introduction}

We study the problem of recovering a 3‑D signal from its randomly rotated and noisy observations. Specifically, we aim to estimate an unknown signal $x\in V$ from $n$ samples of the form
\begin{equation} \label{eq:samples}
    y_i = g_i\cdot x + \varepsilon_i, \quad i=1,\ldots,n,
\end{equation}
where $g_i$ are random elements drawn from a distribution $\rho$ on the rotation group $\SO(3)$, and $\varepsilon_i$ are i.i.d. Gaussian noise terms with zero mean and variance $\sigma^2$. We assume that $V$ is a finite-dimensional space of band‑limited functions on $\R^3$ and that $\rho$ is a non‑uniform distribution over $\SO(3)$; precise mathematical models are provided in the following sections. Our goal is to recover $x$ from the observations $y_1,\ldots,y_n$, with the rotations $g_1,\ldots,g_n$ being unknown and the noise level $\sigma$ potentially large. Importantly, since both the signal and the rotations are unknown, $x$ is identifiable only up to a global rotation; that is, only its $\SO(3)$-orbit can be recovered.

The model~\eqref{eq:samples} is a special case of a broader class of problems in which the goal is to estimate a signal from noisy samples subject to unknown group actions. This class is commonly known as \emph{orbit recovery} problems or \emph{multi-reference alignment}, e.g.,~\cite{bandeira2023estimation,bendory2017bispectrum,perry2019sample}. When the noise level is low, a standard approach is to first estimate the latent group elements from the samples, then align the observations accordingly, and finally average to reduce noise~\cite{singer2011angular,singer2011three,perry2018message}. In high-noise regimes, however, reliably estimating these group elements (e.g., rotations) becomes difficult, making alignment-based procedures unreliable~\cite{dadon2024detection}. In such settings, one typically turns to methods that estimate the signal directly while marginalizing over the unknown transformations; two popular methods are the expectation-maximization algorithm and the method of moments~\cite{sigworth1998maximum,fan2023likelihood,xu2025misspecified,sharon2020method,shahverdi2024moment}. This work focuses on the latter.

\vspace{5pt} \textbf{Motivation: Electron-microscopy-based technologies for structural biology.}
The primary motivation for studying orbit recovery problems of the form~\eqref{eq:samples} comes from single-particle cryo-electron microscopy (cryo-EM), the leading technology for determining the 3-D structure of biological macromolecules~\cite{nogales2016development,bai2015cryo}. In cryo-EM, one seeks to recover a 3-D (band-limited) function representing the molecular density from many noisy 2-D tomographic projections. Each projection corresponds to the molecule viewed after an unknown, randomly drawn 3-D rotation, so the observations are related through latent group actions~\cite{bendory2020single}.

Model~\eqref{eq:samples} captures the same central statistical difficulty---estimating a signal from multiple noisy, randomly rotated copies---while abstracting away the additional complication introduced by the projection operator. In Section~\ref{sec:cryo}, we make this connection precise by relating~\eqref{eq:samples} to the cryo-EM forward model and outlining how the methods developed here can be potentially extended to incorporate projections.
 
Model~\eqref{eq:samples} is also closely connected to cryo-electron tomography (cryo-ET), an emerging imaging modality for visualizing cellular architecture, including organelles and macromolecular assemblies~\cite{chen2019complete,schaffer2019cryo,turk2020promise}. In cryo-ET, a 3-D specimen (e.g., a cell) is imaged by collecting a tilt series of 2-D projections at known viewing angles. These projections are then  integrated to produce a 3-D tomogram that preserves the \textit{in situ} spatial arrangement of macromolecules within the cell. Researchers often extract smaller 3-D sub-volumes from the tomogram, called subtomograms, centered on macromolecules of interest.
While real tomograms contain a heterogeneous mixture of structures, it is useful to consider an idealized setting in which the volume contains many instances of a single underlying structure $x$, each appearing in an unknown, random orientation. Under this abstraction, the statistical structure aligns naturally with~\eqref{eq:samples}, which models observations as noisy, randomly transformed copies of a common signal~\cite{xu2024bayesian}.

\vspace{5pt} \textbf{The method of moments and sample complexity.}
A remarkable line of work has revealed a close connection between the method of moments and the information-theoretic limits of orbit recovery. In~\cite{abbe2018estimation,abbe2018multireference,bandeira2023estimation}, it was shown that, in the high-noise regime, the minimal number of samples required for accurate signal estimation (i.e., the sample complexity) scales as $\sigma^{2d}$, where $d$ is the smallest moment order that uniquely determines the signal's orbit from the observations.
For instance, in model~\eqref{eq:samples} with $\SO(3)$ transformations drawn uniformly at random, one has $d=3$, and consequently the sample complexity scales as $\sigma^{6}$~\cite{bendory2025orbit}. More generally, the $\sigma^{6}$ scaling also arises in many orbit recovery models under uniform group actions~\cite{bendory2017bispectrum,bandeira2020optimal,bandeira2023estimation,edidin2025orbit}.

\vspace{5pt} \textbf{Contribution.} 
Much of the existing work on orbit recovery problems focuses on identifiability: determining when a moment-matching formulation, which typically involves a system of polynomial equations, admits a unique solution and its implication on sample complexity. In practice, however, the field remains largely driven by heuristic algorithms that lack rigorous theoretical guarantees; prominent examples include~\cite{bendory2022dihedral,scheres2012relion,punjani2017cryosparc,zhong2021cryodrgn}.
This paper makes a significant step forward by introducing a computationally efficient algorithm for signal recovery from the second moment when the distribution of rotations is non-uniform, accompanied by provable theoretical guarantees. 
We show that, in this setting, the unknown signal and the rotation distribution are identifiable from the second moment, which implies a sample complexity scaling as $\sigma^{4}$. We further develop a computationally efficient, provable, and stable recovery algorithm that reconstructs the 3-D function by successively solving a sequence of well-conditioned linear systems. This result complements our previous work~\cite{bendory2025orbit}, which addressed recovery under uniformly distributed rotations via the third moment, and generalized~\cite{drozatz2025provable} from $\SO(2)$ to $\SO(3)$. 

\vspace{5pt} \textbf{Paper organization.} 
The paper is organized as follows. Section~\ref{sec:background} reviews relevant harmonic analysis on compact groups and formulates the moments in representation-theoretic terms. Section~\ref{sec:signal_recovery} presents and proves the main results on signal recovery from the second moment. Section~\ref{sec:numerics} introduces the recovery algorithm---based on successively solving linear systems---and demonstrates its effectiveness numerically, including empirical studies of robustness to noise and the impact of the distribution. Finally, Section~\ref{sec:cryo} outlines potential extensions of the approach to the cryo-EM problem.

\section{Background}
\label{sec:background}

\subsection{Harmonic analysis on compact groups}

A function $\rho\in L^2(G)$ is determined by its {\em Fourier coefficients} for each irreducible representation $V$ of G. If $\varphi \colon G \to  U(V)$ is a unitary representation of $G$, the Fourier coefficient of $V$ is the operator on $V$ defined by
\begin{equation}\label{eq.fourier}
\hrho(V) = \int_{G}\rho(g) \varphi_V(g)^* \;dg,
\end{equation}
where $dg$ is the Haar measure on $G$ with total integral one, $\varphi_V(g)$ is the image of $g$ in $U(V)$, and $\varphi_V(g_i)^*$ denotes the adjoint.
Let $\hat{G}$ be the set of distinct irreducible representations of~$G$. The function $\rho$ is recovered from its Fourier coefficients by the Fourier inversion formula
\begin{equation} \label{eq.fourierinv}
\rho(g) = \sum_{V \in \hat{G}} (\dim V) \Tr(\hrho(V) \varphi_V(g)).
\end{equation}

In the case where $\rho(g)dg$ defines a probability distribution, then the normalization condition $\int_G \rho(g) \;dg =1$ implies that $\hat{\rho}(\mathbf{1})=1$, where $\mathbf{1}$ denotes the trivial representation. The uniform distribution corresponds to the case
where $\hrho(V)$ is the $0$-operator for all non-trivial irreducible representations.
By contrast, for a ``generic" distribution, the Fourier coefficients $\hrho(V)$ are invertible
operators for all irreducibles $V$.

\subsubsection{Harmonic analysis over $\SO(3)$}\label{sec:wigner}

Given our focus on the group $\SO(3)$, we briefly review the spherical harmonics and Wigner $D$-matrices, which provide explicit bases for $L^2(S^2)$ and $L^2(\SO(3))$, respectively.

The \textbf{spherical harmonics} $Y_\ell^m : S^2 \to \mathbb{C}$ for $\ell \geq 0$ and $-\ell \leq m \leq \ell$ form an orthonormal basis for $L^2(S^2)$. Using the Condon--Shortley phase convention, they are defined as
\begin{equation}\label{eq:sph_harm}
Y_\ell^m(\theta, \varphi) = \sqrt{\frac{2\ell+1}{4\pi} \frac{(\ell-m)!}{(\ell+m)!}} P_\ell^m(\cos\theta) e^{im\varphi},
\end{equation}
where $P_\ell^m$ denotes the associated Legendre polynomial. For each $\ell$, the span 
\begin{equation}
    H_\ell = \mathrm{span}\{Y_\ell^m : -\ell \leq m \leq \ell\}
\end{equation} is an irreducible representation of $\SO(3)$ of dimension $2\ell + 1$. The Peter-Weyl decomposition gives
\begin{equation}
L^2(S^2) = \bigoplus_{\ell=0}^{\infty} H_\ell.
\end{equation}
A signal in $L^2(S^2)$ is $L$-band limited if it lies in the subspace $\oplus_{\ell=0}^L H_\ell$. Throughout the paper, we consider band-limited signals and use the notation
$L^2(S^2)_L$ to denote the subspace of 
$L$-band-limited signals in $L^2(S^2)$.

For each $\ell$, the representation $\pi_\ell : \SO(3) \to U(2\ell+1)$ acts on $H_\ell$ by rotation of functions. The matrix elements of $\pi_\ell$ in the spherical harmonic basis are the \textbf{Wigner $D$-functions}:
\begin{equation}
D_{m,m'}^\ell(g) = \langle Y_\ell^m, g \cdot Y_\ell^{m'} \rangle.
\end{equation}

Using the $\text{ZYZ}$ Euler angle parameterization $g = R_z(\alpha) R_y(\beta) R_z(\gamma)$ with $\alpha, \gamma \in [0, 2\pi)$ and $\beta \in [0, \pi]$, the Wigner $D$-functions factor as
\begin{equation}\label{eq:wigner_factor}
D_{m,m'}^\ell(\alpha, \beta, \gamma) = e^{-im\alpha} d_{m,m'}^\ell(\beta) e^{-im'\gamma},
\end{equation}
where $d_{m,m'}^\ell(\beta)$ is the Wigner small-$d$ matrix, which can be expressed in terms of Jacobi polynomials.
The normalized functions $\sqrt{2\ell+1} \, D_{m,m'}^\ell$ form an orthonormal basis for $L^2(\SO(3))$ with respect to the Haar measure. Consequently, any probability density $\rho \in L^2(\SO(3))$ admits the Fourier expansion
\begin{equation}\label{eq:fourier_SO3}
\rho(g) = \sum_{\ell=0}^{\infty} (2\ell+1) \operatorname{Tr}\left(\hat{\rho}(H_\ell) D^\ell(g)\right),
\end{equation}
where the Fourier coefficient $\hat{\rho}(H_\ell) \in \mathbb{C}^{(2\ell+1) \times (2\ell+1)}$ is defined by
\begin{equation}\label{eq:fourier_coeff}
\hat{\rho}(H_\ell)_{m,m'} = \int_{\SO(3)} \rho(g) \overline{D_{m,m'}^\ell(g)} \, dg.
\end{equation}
Throughout the text, we use the notations $\hrho^{\ell}_m[m']$ and $\hat{\rho}(H_\ell)_{m,m'}$ interchangeably to represent the coefficients that represent the distribution.

\subsubsection{Clebsch-Gordan coefficients}
The tensor product of two irreducible representations decomposes as
\begin{equation}\label{eq:CG_decomp}
H_{\ell_1} \otimes H_{\ell_2} \cong \bigoplus_{k=|\ell_1-\ell_2|}^{\ell_1+\ell_2} H_k.
\end{equation}
The \textbf{Clebsch-Gordan (CG) coefficients} $\langle \ell_1 m_1 \, \ell_2 m_2 | \ell_3 m_3 \rangle$ provide the explicit intertwining maps. They satisfy the selection rule $m_1 + m_2 = m_3$ and vanish unless $|\ell_1 - \ell_2| \leq \ell_3 \leq \ell_1 + \ell_2$. These coefficients are central to our second moment formulas, as they govern the projection $H_{\ell_2} \otimes H_{\ell_3} \to H_{\ell_1}$, and they are explicitly given by \begin{align} \label{eq.so3inv}
\langle \ell_1 \, m_1 \, \ell_2 \, m_2 | \ell \, m \rangle = \delta_{m,m_1 + m_2} \sqrt{\frac{(2\ell+1)(\ell+\ell_1-\ell_2)!(\ell-\ell_1+\ell_2)!(\ell_1+\ell_2-\ell)!}{(\ell_1+\ell_2+\ell+1)!}} \nonumber\\\;\times 
\sqrt{(\ell+m)!(\ell-m)!(\ell_1-m_1)!(\ell_1+m_1)!(\ell_2-m_2)!(\ell_2+m_2)!} \nonumber\\ \;\times 
\sum_k \frac{(-1)^k}{k!(\ell_1+\ell_2-\ell-k)!(\ell_1-m_1-k)!(\ell_2+m_2-k)!(\ell-\ell_2+m_1+k)!(\ell-\ell_1-m_2+k)!}.
\end{align}
The following lemma gives the explicit relationship between the projection operators and the Clebsch-Gordan coefficients~\cite{bohm2013quantum}.
\begin{lemma}\label{projection}
For a given $\ell$, let $a^\ell = (a^\ell_{-\ell}, \ldots, a^\ell_\ell) \in H_\ell$ denote a vector 
expanded in the spherical harmonic basis. For $|{\ell_2} - {\ell_3}| \le \ell_1 \le \ell_2 + \ell_3$, the projection $(a^{\ell_2} \otimes a^{\ell_3})_{\ell_1} \in H_{\ell_1}$ has components with respect to the
spherical harmonic basis of $H_{\ell_1}$
\[
\bigl[(a^{\ell_2} \otimes a^{\ell_3})_{\ell_1}\bigr]_{k_1} 
= \sum_{\substack{k_2, k_3 \\ k_2 + k_3 = k_1}} 
\langle \ell_2\, k_2\; \ell_3\, k_3 \mid \ell_1\, k_1 \rangle \, 
a^{\ell_2}_{k_2} \, a^{\ell_3}_{k_3},
\]
where $-\ell_1\leq k_1\leq \ell_1$ and the sum is over $|k_2| \le \ell_2$ and $|k_3| \le \ell_3$.
\end{lemma}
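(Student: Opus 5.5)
The plan is to derive the formula directly from the definition of the Clebsch--Gordan coefficients as the change-of-basis coefficients between the product basis of $H_{\ell_2}\otimes H_{\ell_3}$ and the coupled basis adapted to the decomposition \eqref{eq:CG_decomp}.

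First fix conventions. By \eqref{eq:CG_decomp}, $H_{\ell_2}\otimes H_{\ell_3}\cong\bigoplus_{k} H_k$, and this decomposition is multiplicity free, so for each $k$ there is a unique copy of $H_k$ inside the tensor product. Choose in that copy the orthonormal basis $\{|\ell_1 k_1\rangle\}$ corresponding to $Y_{\ell_1}^{k_1}$ under the $\SO(3)$-equivariant identification. The CG coefficients are by definition the expansion coefficients
\[
|\ell_1 k_1\rangle=\sum_{k_2,k_3}\langle \ell_2 k_2\,\ell_3 k_3\mid \ell_1 k_1\rangle\, Y_{\ell_2}^{k_2}\otimes Y_{\ell_3}^{k_3}.
\]
With the Condon--Shortley phase these coefficients are real, and the change of basis is a real orthogonal (in particular unitary) matrix. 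The selection rule $k_2+k_3=k_1$ comes from equivariance under the $z$-axis rotations $R_z(\alpha)$. By \eqref{eq:wigner_factor}, such a rotation acts diagonally, by $e^{-ik_2\alpha}e^{-ik_3\alpha}$ on $Y_{\ell_2}^{k_2}\otimes Y_{\ell_3}^{k_3}$ and by $e^{-ik_1\alpha}$ on $|\ell_1 k_1\rangle$. Equivariance therefore forces the coefficient to vanish unless $k_1=k_2+k_3$.

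Next, write $a^{\ell_2}\otimes a^{\ell_3}=\sum_{k_2,k_3}a^{\ell_2}_{k_2}a^{\ell_3}_{k_3}\,Y_{\ell_2}^{k_2}\otimes Y_{\ell_3}^{k_3}$. The $H_{\ell_1}$-isotypic component of a vector $v$ is its orthogonal projection onto the copy of $H_{\ell_1}$. Since the basis $\{|\ell_1 k_1\rangle\}$ is orthonormal, this projection has coordinates $\langle |\ell_1 k_1\rangle, v\rangle$ with respect to the spherical harmonic basis of $H_{\ell_1}$. Computing this inner product using orthonormality of the product basis gives
\[
\sum_{k_2,k_3}\overline{\langle \ell_2 k_2\,\ell_3 k_3\mid \ell_1 k_1\rangle}\,a^{\ell_2}_{k_2}a^{\ell_3}_{k_3}.
\]
Reality of the CG coefficients removes the conjugate, and the selection rule restricts the sum to $k_2+k_3=k_1$. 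This yields the stated formula.

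The main point needing care is the bookkeeping of conventions rather than any real computation. One must check three things: that the coefficients given by \eqref{eq.so3inv} are indeed the real Condon--Shortley ones; that the identification of the abstract copy of $H_{\ell_1}$ with the spherical harmonic basis is the one those coefficients encode; and that the inner product convention (linear in the second slot) matches the definition of $D^\ell_{m,m'}$. I would handle these by citing the standard construction in \cite{bohm2013quantum}, which is where the explicit formula \eqref{eq.so3inv} comes from.
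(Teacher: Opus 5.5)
Your proposal is correct. The paper gives no argument of its own for this lemma and simply attributes it to \cite{bohm2013quantum}. Your derivation is the standard one that citation stands in for: Clebsch--Gordan coefficients as the unitary change of basis to the coupled basis, the selection rule from equivariance under $z$-axis rotations, the projection's coordinates as inner products against $|\ell_1 k_1\rangle$, and reality of the Condon--Shortley coefficients \eqref{eq.so3inv} to remove the conjugate.
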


\subsubsection{In-plane uniform distributions}
Next, we introduce and study in-plane-uniform distributions. As will be explained in Section~\ref{sec:signal_recovery}, these distributions play a key role in cryo-EM.

A probability distribution $\rho \in L^2(\SO(3))$ is in-plane uniform if $\rho(gh) = \rho(g)$ for all $h \in \SO(2)$, where $\SO(2)$ is embedded as rotations about the $z$-axis; i.e., matrices of the form
\begin{equation*}
 \begin{pmatrix} \cos \theta & \sin \theta & 0\\ -\sin \theta & \cos \theta & 0 \\ 0 & 0 & 1
\end{pmatrix}, \quad    \theta \in [0,2\pi).
\end{equation*}
 Such distributions correspond to viewing directions that are uniformly distributed in the in-plane angle but may have an arbitrary tilt distribution. 
Note that if $\rho$ is in-plane uniform, then $g \cdot \rho$ is as well,  where $g \cdot \rho \in L^2(\SO(3))$
is the function defined by the formula $(g \cdot \rho)(g') = \rho(g^{-1}g')$ for all $g' \in G$. 
In particular, an in-plane uniform distribution is a function $L^2(\SO(3))^{\SO(2)}$, where
$\SO(2)$ acts on $\rho \in L^2(\SO(3))$ by $(h \rho)(g) = \rho(gh)$. Since this action commutes with the usual
$G$ action on $L^2(\SO(3))$, $L^2(\SO(3))^{\SO(2)}$ is an $\SO(3)$ sub-representation. The following lemma shows that, viewed as a function on $\SO(3)$, elements
of $L^2(\SO(3))^{\SO(2)}$ can be characterized in terms of their Fourier coefficients with respect to the bases of spherical harmonics for each irreducible representation $H_\ell$ of $\SO(3)$.

\begin{lemma}\label{lem:inplane_structure}
With respect to the basis $\{Y^\ell_m\}_{-\ell \leq m \leq \ell}$ of spherical
harmonics for $H_\ell$, the Fourier coefficients of a function $\rho \in L^2(\SO(3))^{\SO(2)}$ satisfy
\begin{equation}\label{eq:inplane_fourier}
\hrho^{\ell}_m[m']= 0 \quad \text{unless } m' = 0.
\end{equation}
That is, only the $m' = 0$ column of each Fourier coefficient matrix is non-zero.
\end{lemma}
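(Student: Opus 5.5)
The plan is to use the definition of the Fourier coefficient \eqref{eq:fourier_coeff} directly, together with the invariance of Haar measure under right translation and the explicit form of the Wigner $D$-functions on the subgroup $\SO(2)$ of $z$-axis rotations.

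Fix $h = R_z(\theta) \in \SO(2)$. Since $\rho(gh)=\rho(g)$ for all $g$, and Haar measure on $\SO(3)$ is right-invariant, the substitution $g \mapsto gh$ in \eqref{eq:fourier_coeff} gives
\[
\hrho^{\ell}_m[m'] = \int_{\SO(3)} \rho(gh)\,\overline{D^\ell_{m,m'}(gh)}\,dg = \int_{\SO(3)} \rho(g)\,\overline{D^\ell_{m,m'}(gh)}\,dg .
\]
Because $D^\ell$ is a representation, $D^\ell(gh) = D^\ell(g)D^\ell(h)$. By the factorization \eqref{eq:wigner_factor} with $\alpha=\beta=0$, together with $d^\ell_{m,m'}(0)=\delta_{m,m'}$, the matrix $D^\ell(R_z(\gamma))$ is diagonal with entries $e^{-im'\gamma}$. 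Hence $D^\ell_{m,m'}(gh) = D^\ell_{m,m'}(g)\,e^{-im'\theta}$, up to the sign convention relating $\theta$ to the Euler angle $\gamma$, which does not matter here. Substituting this in gives
\[
\hrho^{\ell}_m[m'] = e^{im'\theta}\,\hrho^{\ell}_m[m'] \quad \text{for all } \theta\in[0,2\pi).
\]
If $m'\neq 0$, choose $\theta$ with $e^{im'\theta}\neq 1$; this forces $\hrho^{\ell}_m[m']=0$. Equivalently, one can average over $\theta$ and use $\frac{1}{2\pi}\int_0^{2\pi} e^{im'\theta}\,d\theta = \delta_{m',0}$.

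The only step needing care is the bookkeeping of conventions. First, one must check whether right multiplication by an $\SO(2)$ element shifts the third Euler angle $\gamma$, which is the index $m'$, and not $\alpha$. This holds because $R_z(\alpha)R_y(\beta)R_z(\gamma)R_z(\theta) = R_z(\alpha)R_y(\beta)R_z(\gamma+\theta)$. Second, the matrix displayed in the paper for $\SO(2)$ is $R_z(-\theta)$ rather than $R_z(\theta)$, which only flips the sign of the phase. Neither issue affects the conclusion.
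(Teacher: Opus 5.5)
Your proof is correct and is essentially the paper's argument. Invariance of $\rho$ under right translation by $h\in\SO(2)$, together with right-invariance of Haar measure, moves $h$ onto the Wigner function. The diagonal action of $z$-axis rotations on the spherical harmonic basis then gives $\hrho^\ell_m[m'] = e^{\pm i m'\theta}\,\hrho^\ell_m[m']$ for all $\theta$. The paper uses that diagonal action directly via $h\cdot Y_\ell^{m'} = e^{i m'\theta}Y_\ell^{m'}$, while you read it off from the Euler-angle factorization and $D^\ell(gh)=D^\ell(g)D^\ell(h)$.

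Your change of variables is written a bit more carefully than the paper's, and the sign conventions you flag are indeed immaterial.
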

\begin{proof}
Since $\rho$ is $H$-invariant, we have that 
\begin{eqnarray*} 
\hrho^{\ell}_m[m']& = &\int_{\SO(3)} \rho(g) \overline{D_{m,m'}^\ell(g)} \, dg\\
& = & \int_{\SO(3)} \rho(gh) \overline{D_{m,m'}^\ell(g)} \, dg\\
& = & \int_{\SO(3)} \rho(g) \overline{D_{m,m'}^\ell(gh)} \, dg,
\end{eqnarray*}
where the second equality follows from a change of variables $g \mapsto gh$.
Now,  
$D_{m,m'}^\ell(g) = \langle Y_\ell^m, g \cdot Y_\ell^{m'} \rangle,$ 
and if $h \in \SO(2)$ corresponds to rotation by an angle $\theta$ about the $z$-axis,
\begin{eqnarray*}
D_{m,m'}^\ell(gh) & = &\langle Y_\ell^m, (g h)\cdot Y_\ell^{m'} \rangle\\
& = & \langle Y_\ell^m, g\cdot ( h \cdot Y_\ell^{m'}) \rangle\\
& = & e^{\iota m'\theta} \langle Y_\ell^m, g\cdot Y_\ell^{m'} \rangle\\
& = & e^{\iota m' \theta} D_{m,m'}^\ell(g),
\end{eqnarray*}
since the action of $\SO(2)$ is diagonalized with respect to the spherical harmonic basis.
Hence, we have that $\hrho^{\ell}_m[m'] = e^{\iota m' \theta} \hrho^{\ell}_m[m']$ for all $\theta \in [0, 2\pi)$,
so if $m' \neq 0$ these coefficients vanish.
\end{proof}

Lemma~\ref{lem:inplane_structure} establishes that the space of $\SO(2)$-invariant
functions is isomorphic to $L^2(S^2)$ 
via the isomorphism $\SO(3)/\SO(2) \cong S^2$.

\subsection{Signal and distribution models}
We model the target signal $x$ as a real-valued, radially discretized band-limited function supported at $R$ spherical shells within the unit ball in $\mathbb{R}^3$. Expanded in spherical harmonics
our function can be represented as:
\begin{equation}
\label{eq:signal}
x(r, \theta, \varphi) = \sum_{\ell=0}^{L} \sum_{m=-\ell}^{\ell} x^{\ell}_m[r] \, Y_\ell^m(\theta, \varphi), 
\end{equation}
where $r=1,\ldots,R$ is a discrete parameter and $\theta, \varphi$ are spherical coordinates. 
As such, our signal space can be identified with the $\SO(3)$-representation 
\begin{equation}
V_L = \bigoplus_{\ell=0}^L H_\ell^R.
\end{equation}
Given a bandlimit of $L$ on the signal and a distribution $\rho \in L^2(\mathrm{SO}(3))$,
the corresponding bandlimited  distribution $\rho_L$ is the function obtained by expanding
the distribution $\rho$ up to band $L$. It can be expressed in the basis of $D$-Wigner matrices as follows:
\begin{equation} \label{eq:distribution}
\rho_L(g) = 
\sum_{\ell=0}^{L}
\;\sum_{m=-\ell}^{\ell}
\;\sum_{m'=-\ell}^{\ell}
\hrho^{\ell}_{m}[m']\, D^{\ell}_{m,m'}(g), \quad g\in \SO(3).   
\end{equation}

\subsection{A Fourier-theoretic perspective of $\SO(3)$ moments}
\label{sec:moments}
Here, we consider the first two population moments
\begin{align} 
M_1(\rho,x) \coloneqq \int_G \rho(g)(gx) dg  \label{eq:population_first_moment}, \\
M_2(\rho,x) \coloneq
\int_G \rho(g)(gx)^{\otimes 2} dg.\label{eq:population_second_moment}
\end{align}

We represent
$x \in V_L=\bigoplus_{\ell=0}^L H_\ell^R$ as an $(L+1)$-tuple of \emph{signal matrices}
$$x = (X_0, \ldots , X_L),$$
where $X_\ell$ is a $(2\ell+1) \times R$ matrix of spherical harmonic coefficients $X_{\ell}=(x_m^{\ell}[r])$
for $m=-\ell,\ldots,\ell$ and $r=1,\ldots,R$.

\textbf{First moment.} The first non-uniform population moment is an element of $V$ which 
is represented by the $(L+1)$-tuple of matrices
\begin{equation} \label{eq.firstmomentdecomp}
M_1(\rho,x) = (\hrho(H_0)^* X_0, \ldots , \hrho(H_\sk{L})^* X_L).
\end{equation}
Hence, if we know the first moment, we know each of the matrices 
$\hrho(H_\ell)^* X_\ell$.
In terms of the Wigner and spherical harmonic bases, the $(m',s)$ entry of $\hrho(H_\ell)^* X_\ell$ is:
\begin{equation}\label{1stmomequation}
(\hrho(H_\ell)^* X_\ell)_{(m',s)} = \sum_{m=-\ell}^{\ell} \overline{\hat{\rho}^\ell_m[m']} \, x^\ell_{m}[s].
\end{equation}
Note that, because $\rho$ is a probability distribution, 
$\int_G \rho(g)\;dg =1$, so $X_0$ is always known from the first moment.

\textbf{Second moment.}
The tensor $V \otimes V$ decomposes
as a sum of tensor products $H_i^R \otimes H_j^R$.
In turn, we can view $H_i^R \otimes H_j^R$ as a sum of $R^2$ summands all equal to $H_i \otimes H_j$. The summand
$H_i \otimes H_j$ decomposes into irreducibles as 
\begin{equation} \label{eq:harmonic_decomp}
(H_i \otimes H_j) = \bigoplus_{\ell = |i-j|}^{i+j} H_{_\ell}.
\end{equation}
Thus, the tensor $X_i \otimes X_j$ decomposes
into components $(X_i \otimes X_j)_{\ell}$, which live in $(H_\ell)^{R^2}$.
Viewed this way, the second moment of $x \in V$, which is an element of $V \otimes V$, decomposes 
as a sum 
\begin{equation} \label{eq.secondmomentdecomp}
M_2(\rho, x) = \sum_{i,j} \left(\sum_{\ell=|i-j|}^{i+j} \hrho(H_\ell)^*(X_i \otimes X_j)_\ell\right),
\end{equation}
where the inner sum reflects the decomposition of $H_i^R \otimes H_j^R$ into its isotypical pieces following
the decomposition~\eqref{eq:harmonic_decomp}. In particular, if we know the second moment we know each of 
the summands $\hrho(H_\ell)^*(X_i \otimes X_j)_\ell$ in the decomposition~\eqref{eq.secondmomentdecomp}. We will denote
by $M_2(x;H_i,H_j,H_k)$ the summand $\hrho(H_k)^*(X_i \otimes X_j)_k$.

The columns of the matrices $X_i$ are the vectors $x^i[s] \in H_i$ for $s =1, \ldots ,R$.
Thus, the tensor $X_i \otimes X_j$ can be viewed as a collection of $R^2$ tensors $x^i[s_1] \otimes x^j[s_2]
\in H_i \otimes H_j$
for $1 \leq s_1, s_2 \leq R$. 
Applying Lemma~\ref{projection} to the second moment~\eqref{eq.secondmomentdecomp}, we obtain an explicit formula in terms of Clebsch-Gordan coefficients for the action of the operator
$\hrho(H_\ell)^*$ on $(x^i[s_1] \otimes x^j[s_2])_\ell$.
To make this precise, we change notation slightly and fix bands $\ell_1, \ell_2, \ell_3$ satisfying the triangle inequality and shell indices $s_2, s_3 \in \{1, \ldots, R\}$. Then,
for $-\ell_1 \leq m \leq \ell_1$, the $m$-th entry in the vector
$\hat{\rho}(H_{\ell_1})^* (x^{\ell_2}[s_2] \otimes x^{\ell_3}[s_3])_{\ell_1}$ is
\begin{align}\label{2ndmomequation}
M_2(\rho, x; H_{\ell_1}[m], H_{\ell_2}[s_2], H_{\ell_3}[s_3]) \coloneq \sum_{\substack{k_2 + k_3 = k_1 \\ |k_i| \le \ell_i}}
\langle \ell_2\, k_2\; \ell_3\, k_3 \mid \ell_1\, k_1 \rangle \,
\overline{\hat{\rho}^{\ell_1}_{k_1}[m]} \, x^{\ell_2}_{k_2}[s_2] \, x^{\ell_3}_{k_3}[s_3].
\end{align}

\section{Main results} 
\label{sec:signal_recovery}

 We begin this section by explicitly introducing the signal and distribution models. These models will also serve us in the numerical experiments in Section~\ref{sec:numerics}. Then, we state the two main theorems on signal recovery from the second moment for a generic distribution on $\SO(3)$ (Theorem~\ref{thm:main}) and one for a generic in-plane uniform distribution on $\SO(3)$ (Theorem~\ref{thm:inplane}). The results are then proved in Section~\ref{sec:thm_main} and Section~\ref{sec:thm_inplane}.

\subsection{Main results}
We are now ready to present the main theoretical results of this paper. 

\begin{theorem}\label{thm:main}
Let $\rho \in L^2(\SO(3))$ be a distribution whose Fourier coefficients $\hrho(H_\ell)$ are invertible
for $0 \leq \ell \leq L$. 
Then, if $R \geq 3$, and $x \in V_L$ is a generic signal, then the pair $(\rho_L,x)$ is determined, up to a global rotation, by the first and second moments $M_1(\rho,x)$ and $M_2(\rho,x)$.
\end{theorem}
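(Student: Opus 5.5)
We will show that the unknowns $\hrho(H_\ell)$ and $X_\ell$ can be peeled off band by band, by induction on $\ell$, using only linear algebra once the low bands are fixed. The key observation is that every $\ell=0$ summand $H_0 \otimes H_\ell$ of the second moment involves the already-known data $X_0$, which is determined by $M_1$ since $\hrho(H_0)=1$. Write $B_\ell := \hrho(H_\ell)^*$. By \eqref{eq.firstmomentdecomp} we know $F_\ell := B_\ell X_\ell$ for every $\ell$. Since $B_\ell$ is invertible, $X_\ell = B_\ell^{-1}F_\ell$. It therefore suffices to recover each $B_\ell$, up to the ambiguity coming from a global rotation, which acts as $B_\ell \mapsto B_\ell \pi_\ell(g)^{*}$ and $X_\ell\mapsto \pi_\ell(g)X_\ell$ simultaneously for all $\ell$.

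\textbf{Base step ($\ell=1$): fixing the rotation.} The band-$1$ data $F_1 = B_1 X_1$ is a $3\times R$ matrix. The second moment summand $M_2(x;H_1,H_1,H_0) = (X_1\otimes X_1)_0$, since $B_0=1$, gives the Gram matrix $X_1^{*}X_1$ up to the CG constant, with the appropriate real structure. For generic $x$ with $R\ge 3$, $X_1$ has rank $3$. Knowing the Gram matrix therefore determines $X_1$ up to an orthogonal transformation of $H_1\cong\R^3$ (in the real basis). We will argue that this ambiguity is exactly a rotation, possibly together with a reflection that must be ruled out. We fix a representative by a global rotation, and then set $B_1 = F_1 X_1^{+}$, using a pseudo-inverse, which is valid since $X_1$ has full row rank. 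The reflection issue is handled using the summand $M_2(x;H_1,H_1,H_1) = B_1 (X_1\otimes X_1)_1$, which is a cross product and is sensitive to orientation. This invariant fixes $\det = +1$ after we substitute $X_1\mapsto -X_1$. Such a substitution is compatible with the real structure, because $-I\notin \SO(3)$ on $H_1$ corresponds to $\ell$-odd parity.

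\textbf{Inductive step.} Suppose $X_1,\dots,X_{\ell-1}$ and $B_1,\dots,B_{\ell-1}$ are known, with $\ell\ge 2$. Consider the summands $M_2(x;H_1,H_\ell,H_{\ell-1}) = B_{\ell-1}(X_1\otimes X_\ell)_{\ell-1}$. Applying $B_{\ell-1}^{-1}$, we learn $(X_1\otimes X_\ell)_{\ell-1}$, which is a known linear function of $X_\ell$ given $X_1$. By Lemma~\ref{projection}, for each pair of shells $(s_1,s_2)$ this reads
\[
\sum_{k_1+k_2=k}\langle 1\,k_1\;\ell\,k_2\mid \ell-1\,k\rangle\, x^1_{k_1}[s_1]\,x^\ell_{k_2}[s_2].
\]
For a fixed column $x^\ell[s_2]\in\C^{2\ell+1}$, we stack the $2\ell-1$ equations for each of the $R\ge 3$ choices of $s_1$. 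This gives a linear system $A(X_1)\,x^\ell[s_2] = b$ with $R(2\ell-1)$ equations in $2\ell+1$ unknowns. If additional equations are needed, we also include the $(X_1\otimes X_\ell)_\ell$ piece from $M_2(x;H_1,H_\ell,H_\ell)$. That piece, however, involves the unknown $B_\ell$, so we will prefer to use $H_1\otimes H_{\ell-1}\to H_\ell$ instead: $M_2(x;H_1,H_{\ell-1},H_\ell) = B_\ell (X_1\otimes X_{\ell-1})_\ell$. The inner factor is known, so this gives a linear system for $B_\ell$ with $R^2(2\ell+1)$ equations. $B_\ell$ is determined provided the $(2\ell+1)\times R^2$ matrix $(X_1\otimes X_{\ell-1})_\ell$ has full row rank, and then $X_\ell = B_\ell^{-1}F_\ell$. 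This version is cleaner, and it is the one we will use.

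\textbf{Main obstacle: genericity of the rank condition.} The remaining claim is that for generic $x$ and $R\ge 3$, the matrix with columns $(x^1[s_1]\otimes x^{\ell-1}[s_2])_\ell$ has rank $2\ell+1$. Rank conditions are Zariski-open, so it suffices to exhibit one example. By $\SO(3)$-equivariance and irreducibility of $H_\ell$, the span of $\{(u\otimes v)_\ell\}$ over all $u,v$ is a nonzero invariant subspace, hence all of $H_\ell$. The difficulty is showing that the finitely many vectors with $R$ choices of $u$ and $R$ choices of $v$ already span. We plan to take $x^1[s]$ to be three independent vectors, which span $H_1$, so that bilinearity reduces the span to $\{(e_{k}\otimes v_{s})_\ell\}$ with $k\in\{-1,0,1\}$. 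Then we choose $v_s$ so that the CG coefficient $\langle 1\,1\;\ell-1\,\ell-1\mid \ell\,\ell\rangle\neq 0$ and the lowering structure produce a triangular system. Counting gives $3R\ge 9$ vectors, while $2\ell+1$ may be larger, so for large $\ell$ we will instead use all pairs $(s_1,s_2)$, giving $R^2\ge 9$ vectors. If $R^2 < 2\ell+1$, this argument fails. In that case we will fall back to combining several projections $H_{j}\otimes H_{\ell-j}\to H_\ell$ for all $j<\ell$, together with $H_j\otimes H_{k}\to H_\ell$ with $j,k<\ell$. All of these are known after applying $B_\ell$, and together they supply $O(\ell^2R^2)$ columns. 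Proving that this combined family spans $H_\ell$ generically, which can be done via a single explicit witness signal, is the step we expect to be the hardest.
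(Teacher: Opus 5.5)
Your architecture matches the paper's. The base case recovers the Gram matrix $X_1^*X_1$ from the invariant part of $M_2$, then $\hrho(H_1)$ from $M_1$, and fixes orientation with the $H_1\otimes H_1\to H_1$ summand. The inductive step solves $B_\ell\,(X_j\otimes X_k)_\ell=\text{known}$ for $B_\ell$ and then sets $X_\ell=B_\ell^{-1}F_\ell$. Both agree with the proof of Theorem~\ref{thm:main} via Lemma~\ref{lem:inductive}.

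The gap is the one step with real content. You need that, for generic $x$ and fixed $R\ge 3$, the stacked coefficient matrix has rank $2\ell+1$ for \emph{every} $\ell\le L$. You leave this unproved, and your partial analysis of it is wrong. Since $X_1$ has rank $3$, every column $(x^1[s_1]\otimes x^{\ell-1}[s_2])_\ell$ lies in the projection of $H_1\otimes\operatorname{span}\{x^{\ell-1}[s]\}$. So the $(1,\ell-1)$ block has rank at most $3\min(R,2\ell-1)$, not $R^2$, and passing from $3R$ to all $R^2$ shell pairs gains nothing. This block alone fails as soon as $2\ell+1>3R$: for example $R=4$, $\ell=6$, where $R^2=16\ge 13$ but the rank is at most $12$. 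Hence for fixed $R$ every large band needs several blocks combined, and the spanning property of that combined family is exactly what you defer. A witness would have to work for all $\ell$ simultaneously, and you give none.

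The paper handles this in Lemma~\ref{lem:inductive}. It stacks the blocks $H_i\otimes H_{\ell-i}\to H_\ell$ into the matrix~\eqref{eq:coeffmatrix} and argues full rank over the field of rational functions. Within a block the $2\ell+1$ bilinear forms $C(x^i[s_2],x^{\ell-i}[s_3])_k$ are linearly independent (your equivariance observation). Distinct blocks are treated as involving independent variables, and a row count gives at least $2\ell+1$ rows once $R\ge 3$.

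Your witness idea can instead be completed uniformly in $\ell$, which gives a cleaner proof of the same step:
\begin{itemize}
\item For every $j\ge 1$, let $x^j[1],x^j[2],x^j[3]$ be the coefficient vectors of $Y_j^{-j},Y_j^{0},Y_j^{j}$.
\item By Lemma~\ref{projection}, $(Y_j^a\otimes Y_{\ell-j}^b)_\ell=\langle j\,a\;(\ell-j)\,b\mid \ell\,(a+b)\rangle\,Y_\ell^{a+b}$. These top-component (``stretched'') Clebsch--Gordan coefficients never vanish.
\item The blocks $1\le j\le\lfloor\ell/2\rfloor$ therefore produce nonzero multiples of $Y_\ell^w$ for $w=0,\pm j,\pm(\ell-j),\pm\ell$. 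These exhaust $-\ell\le w\le\ell$.
\item So the matrix has rank $2\ell+1$ at this single point for all $\ell$ at once.
\item The rank condition is polynomial, and real signals are Zariski dense in the complexified signal space, so it holds for generic real $x$.
\end{itemize}
Only blocks with both bands below $\ell$ are usable here. An $H_0\otimes H_\ell$ block merely reproduces $x^0_0[s]\,B_\ell X_\ell$, which is first-moment data and says nothing about $B_\ell$ before $X_\ell$ is known.
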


By generic signal we mean that
there is a non-empty Zariski open subset of $V_L$ such that the conclusion of the theorem holds. Note that this set may depend on the distribution $\rho$.

While Theorem~\ref{thm:main} assumes that the Fourier coefficients of the distribution are invertible
(at least up to band $L$), this assumption may not be experimentally justified. In particular, in cryo‑EM the distribution of in‑plane rotations is uniform, since each in‑plane rotation is equally likely to occur~\cite{bendory2020single}. The following theorem shows that in this important scenario---where the in‑plane rotation distribution is uniform but otherwise generic---the theory continues to hold, with only a modest increase in the required number of shells.

\begin{theorem}\label{thm:inplane} Let $\rho$ be an in-plane uniform distribution 
whose non-vanishing Fourier coefficients $\{\hrho_m^\ell[0]\}$ with $0 \leq \ell \leq L$ 
are generic. If $R \geq 4$, then, for a generic signal
$x \in V_L$,
the pair $(\rho_L,x)$ is determined, up to a global rotation, by the first and second moments $M_1(\rho,x)$ and $M_2(\rho,x)$.
\end{theorem}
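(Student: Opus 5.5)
Theorem~\ref{thm:inplane} will be proved by a frequency-marching argument that mirrors the one we use for Theorem~\ref{thm:main}, now under the sparsity of Lemma~\ref{lem:inplane_structure}. For an in-plane uniform $\rho$, each $\hrho(H_\ell)$ is a rank-one matrix supported on its $m'=0$ column $v_\ell := (\hrho^\ell_m[0])_m \in \mathbb{C}^{2\ell+1}$.

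\textbf{Reduced data.} By~\eqref{1stmomequation}, the first moment only yields the scalars $\langle v_\ell, X_\ell\rangle := v_\ell^* X_\ell \in \mathbb{C}^{1\times R}$ for each $\ell$. By~\eqref{2ndmomequation}, the second moment only yields, for each admissible triple $(\ell_1,\ell_2,\ell_3)$ and shells $s_2,s_3$, the single number
\[
\sum_{k_1} \overline{\hrho^{\ell_1}_{k_1}[0]}\,\bigl[(x^{\ell_2}[s_2]\otimes x^{\ell_3}[s_3])_{\ell_1}\bigr]_{k_1}.
\]
Equivalently, the data is the pairing of $v_\ell$ with the images of $H_{\ell_2}\otimes H_{\ell_3}\to H_\ell$. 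Using the global rotation, we normalize so that $v_1$ points along a fixed direction; this uses the stabilizer of the row, since $\rho\mapsto g\rho$ acts on $v_\ell$ by $\pi_\ell(g)$. The $\SO(2)$ stabilizer of $v_1$ is then fixed using a low-band signal coefficient.

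\textbf{Initialization.} First, from the $(\ell_1,\ell_2,\ell_3)=(\ell,0,\ell)$ and $(\ell,\ell,0)$ entries, and since $X_0$ is known from the first moment, the second moment gives $v_\ell^* X_\ell$ for free, so these are redundant with $M_1$. The genuinely new information at band $\ell$ comes from triples $(\ell,\ell-1,1)$, $(\ell,1,\ell-1)$, $(\ell-1,\ell,1)$, $(\ell,\ell,1)$, and so on.

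Base step: determine $v_1$ and $X_1$ (a $3\times R$ matrix) from the equations involving only bands $0$ and $1$. These are the triples $(1,1,0)$, $(1,0,1)$, $(0,1,1)$, $(1,1,1)$ and $(2,1,1)$. The latter two involve $v_2$, so the base case must be treated jointly with band $2$, in the spirit of the paper's treatment of the initial bands in Theorem~\ref{thm:main}. The plan is to exhibit explicitly, or certify numerically at one rational point, that the resulting polynomial map is generically injective up to the residual symmetry. By semicontinuity, a single point where the Jacobian has full rank and the fiber is a single orbit suffices for a Zariski-open set.

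\textbf{Inductive step.} Suppose $v_j$ and $X_j$ are known for $j<\ell$. The triples $(\ell-1,\ell,1)$ with $s_2,s_3$ ranging over all shells give the equations
\[
v_{\ell-1}^*\,(x^\ell[s_2]\otimes x^1[s_3])_{\ell-1} = \text{data},
\]
which are \emph{linear} in $X_\ell$ with known coefficients. This gives $R^2$ equations in the $(2\ell+1)R$ unknowns of $X_\ell$, and further equations come from $(\ell-1,1,\ell)$ and $(\ell-2,\ell,2)$. Once $X_\ell$ is known, the triples $(\ell,\ell-1,1)$ give linear equations $v_\ell^* A_{s_2,s_3}=\text{data}$ for $v_\ell$, with $R^2$ equations in $2\ell+1$ unknowns. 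Genericity of the $v$'s and $x$ should make both systems full rank, and the shell count enters through the number of equations.

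\textbf{Main obstacle.} Because $\hrho(H_{\ell-1})$ is now rank one, each shell pair contributes only one scalar per triple rather than $2\ell-1$ of them. The linear system for $X_\ell$ therefore has roughly $cR^2$ equations, where $c$ is the number of usable lower triples, against $(2\ell+1)R$ unknowns. Counting only triples with a low band, $c$ does not grow with $\ell$, so we must also use triples $(j,\ell,\ell-j)$ for all $j<\ell$. This gives on the order of $\ell R^2$ equations, enough when $R\ge 4$.

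The hard part is proving these equations are independent for generic data. I would show the coefficient matrix has full rank at one explicit point, for example $v_j=e_0$ and $x$ concentrated on a few $m$'s, and argue that full rank is a Zariski-open condition, holding uniformly for all $\ell$ by an induction on the pattern of CG coefficients. The selection rule $k_2+k_3=k_1$ with $k_1=0$ gives a triangular structure there.
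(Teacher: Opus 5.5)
Your inductive step for $\ell\ge3$ matches the paper's Step~2: you solve each column of $X_\ell$ from the components $(j,\ell,\ell-j)$, $1\le j\le\ell-1$, and then solve for $v_\ell$. The two places where real work is needed, however, are not handled.

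\emph{Base case.} You defer it to a numerical certificate, and the principle you cite is not valid. Lower semicontinuity of the Jacobian rank only shows that generic fibers are finite unions of orbits. That the fiber over one chosen point is a single orbit does not propagate to a Zariski-open set, because preimages of nearby points can escape to infinity. Certifying that whole fiber is itself the global problem. (Also, $(1,1,1)$ involves $v_1$, not $v_2$.) The missing idea is elementary. The $(0,1,1)$ components lie in the $\SO(3)$-invariant part of $M_2$, which is the uniform second moment, so they give the Gram matrix $X_1^*X_1$ and hence $X_1$ up to $\Orth(3)$. For $R\ge3$ this matrix has rank $3$, so $v_1$ follows linearly from the first-moment row $v_1^*X_1$, and one further component such as $(1,1,1)$ removes the reflection. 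No band-$2$ data enters.

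\emph{Band $\ell=2$.} Here your order (signal before distribution) fails. Each equation involves a single column $w=x^2[s]$. The only components linear in $X_2$ and free of $v_2$ are $(1,2,1)$: the component $(1,1,2)$ is the same data up to sign, and $(0,2,2)$ is quadratic in $X_2$. The functionals $w\mapsto\langle v_1,(w\otimes x^1[t])_1\rangle$ depend linearly on $x^1[t]\in H_1$, so they give at most $3$ independent equations for $5$ unknowns, for every $R$. Even your own count $(\ell-1)R\ge 2\ell+1$ needs $R\ge5$ here. The paper reverses the order at $\ell=2$. It first recovers $v_2$ linearly from the $(2,1,1)$ components, which involve only $X_1$; over all shell pairs these span the image of $H_1\otimes H_1\to H_2$, i.e.\ all of $H_2$. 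Only then does it solve for $X_2$, using the components $(2,1,2)$, which contain $v_2$, together with $(1,1,2)$. By the same rank bound, you should take $(2,1,2)$ over all $s_2$ and add the first-moment row $v_2^*X_2$ to reach five independent equations.

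Your counting and your full-rank certificate in the inductive step also need repair. Coefficient rows built from rank-one tensors $x^a[s]\otimes x^b[t]$ are linearly dependent across shells. Per column, $(j,\ell,\ell-j)$ contributes at most $\min(R,2(\ell-j)+1)$ equations; the resulting bound $3+4(\ell-2)=4\ell-5\ge 2\ell+1$ still leaves room for $R=4$, $\ell\ge3$. The system for $v_\ell$ from $(\ell,\ell-1,1)$ alone has rank at most $\min(3R,2\ell+1)$, since its coefficient vectors lie in the image of $\mathrm{span}(X_{\ell-1})\otimes H_1$. For $R=4$ this fails from $\ell=6$ on. You need the family $(\ell,i,\ell-i)$ over several $i$, as in Lemma~\ref{lem:inductive}, or the first moment together with $(\ell,\ell,1)$ once $X_\ell$ is known.

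More seriously, your proposed test point $v_j=e_0$ is degenerate. There $k_1=0$ is forced, so in every component $(\ell_1,\ell,\ell_3)$ with $\ell_3<\ell$ the selection rule gives $k_2=-k_3$ with $|k_3|\le\ell_3\le\ell-1$. The entries $x^\ell_{\pm\ell}[s]$ therefore never appear, and the coefficient matrix has two zero columns. The ``triangular structure'' you invoke is exactly what destroys full rank at that point. Any certificate must use $v_j$ with nonzero entries away from $k=0$.
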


By generic Fourier coefficients and generic signal we mean the following. If we let $\mathcal{U}_L$ denote the space of all possible 
$\{\hrho_m^\ell[0]\}$ with $0 \leq \ell \leq L$, then there is a Zariski open set in $\mathcal{U}_L$ such that the following holds:
For every in-plane uniform distribution $\rho$ whose non-vanishing Fourier coefficients lie in this Zariski open set,  there is a corresponding Zariski open set in the signal space $V_L$ such that for $x$ in this open set the conclusion of the theorem holds.

\begin{corollary}
Consider the class of distributions and signals defined in Theorems~\ref{thm:main} and~\ref{thm:inplane}.
In the high-noise regime, where $n,\sigma \to \infty$, the sample complexity required to estimate a signal from observations of~\eqref{eq:samples} scales as $n = \omega(\sigma^4)$.
\end{corollary}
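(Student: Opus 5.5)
The corollary follows by combining the identifiability statements of Theorems~\ref{thm:main} and~\ref{thm:inplane} with the general correspondence between moment identifiability and sample complexity established in~\cite{abbe2018estimation,bandeira2023estimation}. In that framework, the high-noise sample complexity scales as $\sigma^{2d}$, where $d$ is the smallest moment order that determines the orbit of the unknown parameters. Here the unknown parameter is the pair $(\rho_L,x)$, modulo the diagonal $\SO(3)$ action $g\cdot(\rho,x)=(\rho(\cdot\, g^{-1}),\, g x)$. I will first check that both moments are invariant under this action. Then I will apply the general result with $d=2$, which gives $n=\omega(\sigma^4)$.

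\textbf{Step 1: sufficiency of $n=\omega(\sigma^4)$.} Because the noise is Gaussian with known $\sigma$, the empirical moments
\[
\tfrac1n\sum_i y_i \quad\text{and}\quad \tfrac1n\sum_i y_i\otimes y_i - \sigma^2 I
\]
are unbiased estimators of $M_1$ and $M_2$. Their variances are dominated by $\sigma^2/n$ and $\sigma^4/n$, respectively. Hence, when $n/\sigma^4\to\infty$, both moments are estimated consistently. By Theorems~\ref{thm:main} and~\ref{thm:inplane}, the map from orbits of $(\rho_L,x)$ to $(M_1,M_2)$ is injective on a Zariski open set. Since this map is polynomial, injectivity of the induced map on the orbit space, together with continuity of the inverse, yields consistent estimation of the orbit. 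An alternative route is to invoke the MLE and KL-divergence expansion of~\cite{bandeira2023estimation}. That expansion writes the divergence as $\sum_k \sigma^{-2k}\|M_k(\theta)-M_k(\theta')\|^2$ up to constants. If the first two moments differ, the divergence is of order at least $\sigma^{-4}$.

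\textbf{Step 2: necessity.} Here the plan is to exhibit, for a generic pair, distinct orbits whose first moments agree but whose second moments do not. One could perturb $x$ and $\rho$ along a curve that keeps each product $\hrho(H_\ell)^*X_\ell$ fixed; this is possible because the first moment constrains only these products, leaving many free parameters. The KL divergence between the corresponding observation distributions is then $\Theta(\sigma^{-4})$, so $n=\Omega(\sigma^4)$ samples are required by the standard Le Cam argument.

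\textbf{Main obstacle.} The delicate point is that Theorems~\ref{thm:main} and~\ref{thm:inplane} are stated for generic signals only and give exact identifiability, not a uniform or local stability bound. The results of~\cite{bandeira2023estimation} need the moment map to separate orbits in a neighbourhood, so that the Fisher-type lower bound at order $\sigma^{-4}$ does not degenerate. I would handle this by restricting to the Zariski open set on which the successive linear systems in the proof are invertible. On that set the reconstruction map is a rational, hence locally Lipschitz, function of $(M_1,M_2)$, which supplies the needed local stability. The statement is asymptotic in $\sigma$ for fixed generic parameters, so non-uniformity near the boundary of this open set does not affect the conclusion.
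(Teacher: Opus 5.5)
Your proposal is correct and follows the route the paper intends: the paper gives no separate proof and simply feeds the identifiability of Theorems~\ref{thm:main} and~\ref{thm:inplane} into the $\sigma^{2d}$ correspondence of~\cite{abbe2018estimation,bandeira2023estimation} with $d=2$. Your plug-in consistency argument, and your check that $M_1$ alone cannot identify the orbit (which is what fixes $d=2$ rather than $d=1$), are therefore added rigor rather than a different approach. One minor slip: the pair action that leaves $M_1,M_2$ invariant is $(\rho,x)\mapsto(\rho(\cdot\,g),\,gx)$, since $\int\rho(hg)\,h(gx)\,dh=\int\rho(u)\,ux\,du$; with $\rho(\cdot\,g^{-1})$ the invariance check you propose would fail.
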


 \begin{remark}[The optimality of the results] It is unclear whether the results of Theorem~\ref{thm:main} and Theorem~\ref{thm:inplane} are optimal. In particular, the bounds of $R=3$ and $R=4$  may be an artifact of the proof technique of the specific algorithm we analyze.
\end{remark}

\subsection{Proof of Theorem~\ref{thm:main}} 
\label{sec:thm_main}
Our proof of Theorem~\ref{thm:main} is inductive and constructive. For each $\ell$, we first use the second moment to solve for the generalized Fourier coefficient $\hrho(H_\ell)$, assuming that Fourier coefficients $\hrho(H_{j})$ and signal coefficients $X_{j}$ are known for $j \leq l$. Having done this, we can easily determine the signal coefficient matrix $X_\ell$ from the first moment because the first moment yields $\hrho(H_\ell)^* X_\ell$. Hence, if $\hrho(H_\ell)$ is invertible we can determine $X_\ell$ as well. This alternating procedure is illustrated in Figure~\ref{fig:band-ladder}.

\par\noindent
\begin{minipage}{\linewidth}
\centering
\begin{tikzpicture}[
  font=\small,
  sigbox/.style={draw, rounded corners=4pt, align=center, inner sep=6pt, 
                 minimum width=2.8cm, minimum height=1cm,
                 fill=blue!8, draw=blue!50!black, line width=0.6pt},
  distbox/.style={draw, rounded corners=4pt, align=center, inner sep=6pt, 
                  minimum width=2.8cm, minimum height=1cm,
                  fill=orange!10, draw=orange!60!black, line width=0.6pt},
  knownbox/.style={draw, rounded corners=4pt, align=center, inner sep=6pt, 
                   minimum width=2.8cm, minimum height=1cm,
                   fill=gray!15, draw=gray!60, line width=0.6pt},
  mainflow/.style={-{Latex[length=2.5mm]}, thick, black!80},
  depflow/.style={-{Latex[length=2mm]}, densely dashed, gray!60, thin},
  steplabel/.style={font=\scriptsize\bfseries, fill=white, inner sep=2pt, circle},
]

\node[font=\normalsize\bfseries, blue!50!black] at (-2.2, 1) {Signal};
\node[font=\normalsize\bfseries, orange!60!black] at (2.2, 1) {Distribution};

\node[knownbox] (x0) at (-2.2, 0) {$X_0$};
\node[knownbox] (rho0) at (2.2, 0) {$\hat{\rho}(H_0)$};
\node[knownbox] (x1) at (-2.2, -1.8) {$X_1$};

\def\rowsep{1.8}

\node[distbox] (rho1) at (2.2, -1.8) {$\hat{\rho}(H_1)$};

\node[distbox] (rho2) at (2.2, {-1.8-1*\rowsep}) {$\hat{\rho}(H_2)$};
\node[sigbox]  (x2)   at (-2.2, {-1.8-1*\rowsep}) {$X_2$};

\node[distbox] (rho3) at (2.2, {-1.8-2*\rowsep}) {$\hat{\rho}(H_3)$};
\node[sigbox]  (x3)   at (-2.2, {-1.8-2*\rowsep}) {$X_3$};

\node[distbox] (rho4) at (2.2, {-1.8-3*\rowsep}) {$\hat{\rho}(H_4)$};
\node[sigbox]  (x4)   at (-2.2, {-1.8-3*\rowsep}) {$X_4$};

\node[font=\large] at (-2.2, {-1.8-4*\rowsep+0.5}) {$\vdots$};
\node[font=\large] at (2.2, {-1.8-4*\rowsep+0.5}) {$\vdots$};


\draw[mainflow] (x0.east) -- node[steplabel, above] {\footnotesize 1} (rho1.north west);
\draw[mainflow] (x1.east) -- (rho1.west);

\draw[mainflow] (x1.south east) -- node[steplabel, above, pos=0.4] {\footnotesize 2} (rho2.north west);
\draw[depflow] (x0.south east) to[out=-20, in=150] (rho2.north);

\draw[mainflow] (rho2.west) -- node[steplabel, above] {\footnotesize 3} (x2.east);

\draw[mainflow] (x2.south east) -- node[steplabel, above, pos=0.4] {\footnotesize 4} (rho3.north west);
\draw[depflow] (x1.south east) to[out=-30, in=160] (rho3.west);

\draw[mainflow] (rho3.west) -- node[steplabel, above] {\footnotesize 5} (x3.east);

\draw[mainflow] (x3.south east) -- node[steplabel, above, pos=0.4] {\footnotesize 6} (rho4.north west);
\draw[depflow] (x2.south east) to[out=-30, in=160] (rho4.west);

\draw[mainflow] (rho4.west) -- node[steplabel, above] {\footnotesize 7} (x4.east);

\draw[decorate, decoration={brace, amplitude=5pt, mirror}, gray!70, line width=0.5pt] 
  (-3.9, 0.4) -- (-3.9, -2.2) node[midway, left=6pt, font=\scriptsize, text=gray!70, align=center] {known\\(base case)};

\draw[decorate, decoration={brace, amplitude=5pt}, gray!70, line width=0.5pt] 
  (3.9, 0.4) -- (3.9, -0.4) node[midway, right=6pt, font=\scriptsize, text=gray!70, align=center] {known};

\end{tikzpicture}

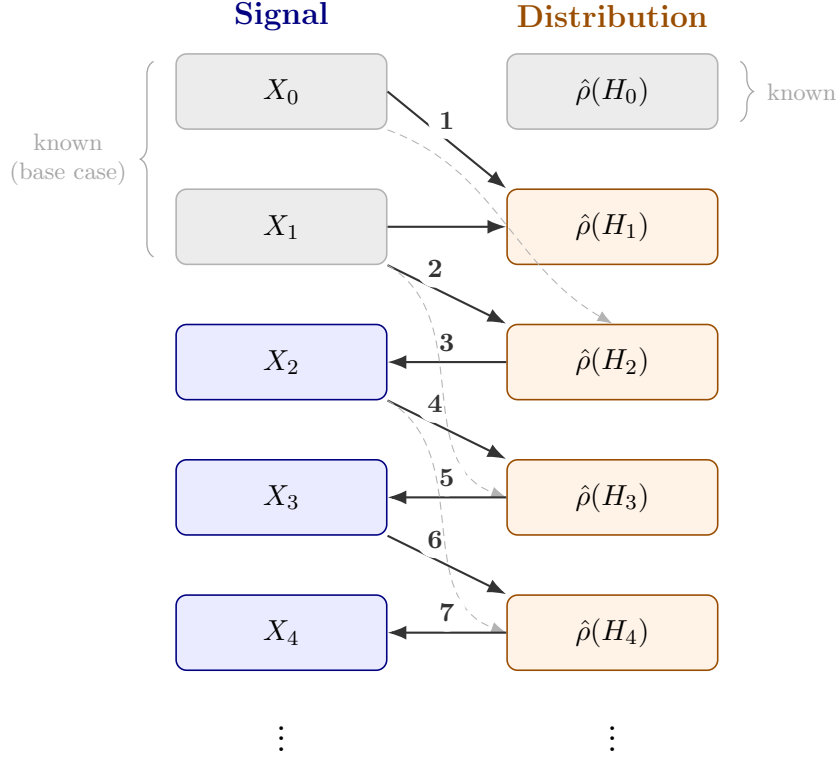
\captionof{figure}{A schematic illustration of the frequency-marching recovery. Signal coefficients (blue, left) and distribution Fourier coefficients (orange, right) are recovered at progressively higher frequency bands. At each band $\ell \geq 2$, the distribution $\hat{\rho}(H_\ell)$ is first solved from the second moment using previously recovered signal coefficients (diagonal and dashed arrows), and then $X_\ell$ is recovered from the first moment by inverting $\hat{\rho}(H_\ell)$ (horizontal arrows). The numbers show the order of computation.}
\label{fig:band-ladder}
\end{minipage}\\

 \paragraph{\bf{Base case.}}
 We begin by recalling that for any compact group, the $G$-invariant part 
of the $d$-th moment with respect to an arbitrary probability distribution is the uniform $d$-th moment. In particular, the uniform second moment is known from the non-uniform second moment, so by~\cite{bendory2024sample} we know the matrices $X^*_\ell X_\ell$ for each $\ell$. Moreover, we know $X_0$ since we know $\hrho(\mathbf{1})=1$ for any probability distribution and we know $\hrho(\mathbf{1})X_0$. Let $X'_1$
 be any square root of the Gram matrix $X^*_1 X_1$. Once we fix $X'_1$,  we can uniquely determine the Fourier coefficient $\hrho(H_1)$ from the first moment $\hrho(H_1)^* X'_1$ since we assume
 that we have at least three shells ($R\geq 3$) and the latter implies that the rank of $X'_1$  is three for a generic signal. 
Hence, we have recovered the first bands of both the distribution and signal up to the same $\mathrm{O}(3)$ action,
which is equivalent to recovering their joint $\SO(3)$ orbit up to a global sign.
We can eliminate the sign ambiguity by comparing
$(\hrho(H_1)')^*(X'_1 \otimes X'_1)_1$ with the component $\hrho(H_1)^*(X_1 \otimes X_1)_1$ of the second moment. 

 \paragraph{\bf{Induction step.}} Now, suppose that we have determined the following Fourier coefficients,
 $\hrho(\mathbf{1}), \ldots , \hrho(H_{\ell-1})$, and thus the corresponding signal matrices $X_1,\ldots, X_{\ell-1}$.
 We will show that we can solve a linear system to determine the Fourier coefficient $\hrho(H_{\ell})$. 
 \begin{lemma}\label{lem:inductive}
Suppose the orbit of $(\rho, x) \in L^2(\mathrm{SO}(3))_{\ell-1} \times (L^2(S^2)_{\ell-1})^R$ is fixed with $\ell > 1$. If $-\ell \leq s \leq \ell$, the distribution coefficients 
${\rho}^\ell_k[s]$ for $-\ell \leq k \leq \ell$
are determined by the invariants $M_2(\rho, x; H_\ell[s_1], H_i[s_2], H_{\ell-i}[s_3])$ for $0 \leq i \leq \lfloor \ell/2 \rfloor$, provided that $R \geq 3$.
\end{lemma}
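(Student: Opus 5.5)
Fix the column index $s$ of $\hrho(H_\ell)$; the column is the vector $c=(\overline{\hrho^\ell_k[s]})_{-\ell\le k\le\ell}\in\mathbb{C}^{2\ell+1}$. By \eqref{2ndmomequation}, each invariant $M_2(\rho,x;H_\ell[s],H_i[s_2],H_{\ell-i}[s_3])$ is linear in $c$:
\[
M_2(\rho,x;H_\ell[s],H_i[s_2],H_{\ell-i}[s_3]) = \sum_{k=-\ell}^{\ell} c_k\, \bigl[(x^{i}[s_2]\otimes x^{\ell-i}[s_3])_\ell\bigr]_k .
\]
Here the vector $w_{i,s_2,s_3}:=(x^{i}[s_2]\otimes x^{\ell-i}[s_3])_\ell\in H_\ell$ is computed from Lemma~\ref{projection}. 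It involves only the bands $i,\ell-i\le \ell-1$, which are assumed known. The exception is $i=0$, which involves $X_\ell$ itself. For $i=0$ the projection $(x^0\otimes x^\ell)_\ell$ is just $x^0[s_2]\,x^\ell[s_3]$, so that block involves the unknown $X_\ell$. I will therefore use only $1\le i\le\lfloor \ell/2\rfloor$, which is nonempty since $\ell>1$.

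Collecting these gives a linear system $W^{T}c=b$. The columns of $W$ are the vectors $w_{i,s_2,s_3}$ over $1\le i\le \lfloor\ell/2\rfloor$ and $1\le s_2,s_3\le R$, and $b$ is the vector of measured invariants. Since the common global rotation has been fixed on the lower bands, $W$ is determined. The claim therefore reduces to showing that $W$ has rank $2\ell+1$ for generic $x$, i.e.\ the vectors $w_{i,s_2,s_3}$ span $H_\ell$. Rank is a Zariski-open condition in the coefficients of $x$, so it suffices to exhibit one signal (with $R\ge3$) for which the span is all of $H_\ell$.

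To prove spanning, I use the term $i=1$: the vectors $(x^1[s_2]\otimes x^{\ell-1}[s_3])_\ell$. With $R\ge 3$, generically $x^1[1],x^1[2],x^1[3]$ form a basis of $H_1$. By bilinearity, their span together with a single generic $v=x^{\ell-1}[s_3]$ is then the image of $H_1\otimes v$ under the projection $\pi:H_1\otimes H_{\ell-1}\to H_\ell$. This image is $3$-dimensional at most, which is insufficient alone. Using the $R$ different vectors $v_1,\dots,v_R\in H_{\ell-1}$, the image becomes $\pi(H_1\otimes \mathrm{span}(v_r))$.

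I expect the main obstacle to be the rank count when $R$ is small relative to $\ell$. Each $v$ contributes at most $3$ dimensions, so with $R=3$ the vectors from $i=1$ alone give at most $9$, and I need $2\ell+1$. I would therefore combine all $i\le\lfloor\ell/2\rfloor$ together with the $R^2$ shell pairs. A crude count gives $\lfloor\ell/2\rfloor R^2\ge 2\ell+1$ when $R\ge3$; for example $\ell=2$ needs $5\le 9$ and $\ell=3$ needs $7\le9$.

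To certify full rank I would pick a convenient witness signal: highest-weight type vectors, e.g.\ $x^{i}[1]=Y_i^{i}$, $x^{i}[2]=Y_i^{-i}$, $x^i[3]=Y_i^0$, and similar choices for $x^{\ell-i}$. Then by the selection rule $k_2+k_3=k_1$, each $w$ is supported on a single weight $k$ with an explicit nonzero CG coefficient. I would need to show that the weights achieved cover all of $-\ell,\dots,\ell$. Since single-weight choices only reach a few weights, I would instead choose generic vectors with prescribed supports and argue triangularity in the weight ordering, using that CG coefficients with $|k_2|\le i$, $|k_3|\le \ell-i$ and $k_2+k_3=k$ are nonzero for "stretched" configurations. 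The delicate point is verifying the nonvanishing of the relevant CG coefficients. Genericity then gives full rank on a nonempty Zariski open set, so the column $c$, and hence $\hrho^\ell_k[s]$ for all $k$, is uniquely determined by least squares.
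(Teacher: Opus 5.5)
Your argument is sound and takes a different route from the paper's. The paper keeps $i=0$ and uses, for each $i$, only the $2R-1$ ``cross'' shell pairs $(1,s)$, $(s,1)$. It counts $(\lfloor\ell/2\rfloor+1)(2R-1)\ge 2\ell+1$ rows and asserts generic full rank by treating the monomials in different rows as independent variables.

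You drop $i=0$, keep all $R^2$ shell pairs, and reduce everything to exhibiting one signal at which the vectors $(x^i[s_2]\otimes x^{\ell-i}[s_3])_\ell$ span $H_\ell$; Zariski openness then extends this to generic signals. Dropping $i=0$ is necessary, not just cautious. By \eqref{2ndmomequation} those invariants equal $x^0_0[s_2]\,(\hrho(H_\ell)^*X_\ell)_{s,s_3}$, so their coefficients involve the not-yet-known $X_\ell$. Without them the paper's cross pairs are too few: for $R=3$ they give three rows at $\ell=2$ and five at $\ell=3$, against $5$ and $7$ unknowns. So your use of all pairs is what makes the count work. Your witness also gives a genuine certificate, where the paper's independence-of-variables claim is only heuristic.

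As written, though, you leave the decisive rank step open, and you discard the witness that already works. Two observations finish it.

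First, stretched Clebsch--Gordan coefficients never vanish. In \eqref{eq.so3inv} with $\ell=\ell_1+\ell_2$, the factors $k!$ and $(\ell_1+\ell_2-\ell-k)!$ force $k=0$, leaving a single positive term. Hence $(Y_i^{a}\otimes Y_{\ell-i}^{b})_\ell$ is a nonzero multiple of $Y_\ell^{a+b}$.

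Second, take $x^j[1]=Y_j^{j}$, $x^j[2]=Y_j^{-j}$, $x^j[3]=Y_j^{0}$ for all $1\le j\le \ell-1$. The pair $(i,\ell-i)$ then yields the weights $0,\pm\ell,\pm i,\pm(\ell-i),\pm(\ell-2i)$. Moreover $\bigcup_{1\le i\le\lfloor\ell/2\rfloor}\{\pm i,\pm(\ell-i)\}=\{\pm1,\dots,\pm(\ell-1)\}$. Hence every basis vector $Y_\ell^k$ appears, up to a nonzero scalar, among the columns of $W$, and no triangularity argument is needed.

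This witness is complex rather than a real signal, but that is harmless. The minors of $W$ are polynomials (without conjugation) in the coefficients, and real signals are Zariski dense in the complexified coefficient space.
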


\begin{proof}
For fixed $s_1$, the second moment equation~\eqref{2ndmomequation} 
\[
M_2(\rho, x; H_\ell[s_1], H_i[s_2], H_{\ell-i}[s_3]) = \sum_{\substack{k_2 + k_3 = k_1 \\ |k_1| \le \ell, |k_2| \le i, |k_3| \le \ell-i}}
\langle i\, k_2\; (\ell-i)\, k_3 \mid \ell\, k_1 \rangle \,
\overline{\hat{\rho}^{\ell}_{k_1}[s_1]} \, x^{i}_{k_2}[s_2] \, x^{\ell-i}_{k_3}[s_3],
\]
gives a linear equation in the $2\ell +1$ unknowns $\overline{\hrho}^\ell_k[s_1]$ with $-\ell \leq k \leq \ell$.
The coefficients $C(x^{i}[s_2], x^{\ell-i}[s_3])_k = \sum_{\substack{k_1 - k_2 = k}} \langle i\, k_1\; (\ell-i)\, k_2 \mid \ell\, k \rangle \, x^{i}[s_2]_{k_1} x^{\ell-i}[s_3]_{k_2}$ are bilinear forms in the variables 
$x^{i,\ell-i}_{s,t}[s_2,s_3] \coloneq x^{i}_s[s_2] x^{\ell -i}_t[s_3]$.
Taking indices $(s_2, s_3) \in S_{i,\ell-i}[R]$, where
\[
S_{i,\ell-i}[R] = \begin{cases}
\{(1,1), \ldots, (1,R), (2,1), \ldots, (R,1)\} & \text{if } i \neq \ell-i, \\
\{(1,1), \ldots, (1,R)\} & \text{if } i = \ell-i,
\end{cases}
\]
and varying $i$ from $0$ to $\lfloor \ell/2 \rfloor$, we obtain
$(\lfloor \ell/2 \rfloor + 1) \cdot (2R - 1)$
equations.
For $R \geq 3$, this gives at least $(\lfloor \ell/2 \rfloor + 1) \cdot 5 \geq 2\ell + 1$ when $\ell \geq 1$. 

For two distinct pairs $(s^{\prime}_2, s^{\prime}_3), (s^{\prime\prime}_2, s^{\prime\prime}_3) \in S_{i,\ell-i}[R]$, there are no algebraic relations between the corresponding sets of monomials, so we may treat them as distinct sets of variables.
Using the shorthand $C(\{x^{i,\ell-i}_{s,t}[s_2,s_3]\})_k$ for the coefficient $C(x^{i}[s_2], x^{\ell-i}[s_3])_k$, the coefficient matrix is
\begin{equation} \label{eq:coeffmatrix}
M = \begin{pmatrix}
C(\{x^{0,\ell}_{s,t}[1,1]\})_{-\ell} & \cdots & C(\{x^{0,\ell}_{s,t}[1,1]\})_{\ell} \\
\vdots & & \vdots \\
C(\{x^{0,\ell}_{s,t}[1,R]\})_{-\ell} & \cdots & C(\{x^{0,\ell}_{s,t}[1,R]\})_{\ell} \\
C(\{x^{0,\ell}_{s,t}[2,1]\})_{-\ell} & \cdots & C(\{x^{0,\ell}_{s,t}[2,1]\})_{\ell} \\
\vdots & & \vdots \\
C(\{x^{0,\ell}_{s,t}[R,1]\})_{-\ell} & \cdots & C(\{x^{0,\ell}_{s,t}[R,1]\})_{\ell}\\
C(\{x^{1,\ell-1}_{s,t}[1,1]\})_{-\ell} & \cdots & C(\{x^{1,\ell-1}_{s,t}[1,1]\})_{\ell} \\
\vdots & & \vdots \\
C(\{x^{1,\ell-1}_{s,t}[1,R]\})_{-\ell} & \cdots & C(\{x^{1,\ell-1}_{s,t}[1,R]\})_{\ell} \\
C(\{x^{1,\ell-1}_{s,t}[2,1]\})_{-\ell} & \cdots & C(\{x^{1,\ell-1}_{s,t}[2,1]\})_{\ell} \\
\vdots & & \vdots \\
C(\{x^{1,\ell-1}_{s,t}[R,1]\})_{-\ell} & \cdots & C(\{x^{1,\ell-1}_{s,t}[R,1]\})_{\ell}\\
\vdots & & \vdots \\
C(\{x^{\lfloor \ell/2 \rfloor, \lceil \ell/2 \rceil}_{s,t}[1,1]\})_{-\ell} & \cdots & C(\{x^{\lfloor \ell/2 \rfloor, \lceil \ell/2 \rceil}_{s,t}[1,1]\})_{\ell} \\
\vdots & & \vdots \\
C(\{x^{\lfloor \ell/2 \rfloor , \lceil \ell/2 \rceil}_{s,t}[1,R]\})_{-\ell} & \cdots &  C(\{x^{\lfloor \ell/2 \rfloor , \lceil \ell/2 \rceil}_{s,t}[1,R]\})_{\ell}\\
C(\{x^{\lfloor \ell/2 \rfloor, \lceil \ell/2 \rceil}_{s,t}[2,1]\})_{-\ell} & \cdots & C(\{x^{\lfloor \ell/2 \rfloor, \lceil \ell/2 \rceil}_{s,t}[2,1]\})_{\ell} \\
\vdots & & \vdots \\
C(\{x^{\lfloor \ell/2 \rfloor, \lceil \ell/2 \rceil}_{s,t}[R,1]\})_{-\ell} & \cdots & C(\{x^{\lfloor \ell/2 \rfloor, \lceil \ell/2 \rceil}_{s,t}[R,1]\})_{\ell} 
\end{pmatrix}.
\end{equation}
Since the forms $C(\{x^{i,\ell-i}_{s,t}[s_2,s_3]\})_k$ for $k = -\ell, \ldots, \ell$ are linearly independent for each fixed $(s_1, s_2)$, the matrix $M$ has full rank over the field of rational functions. Thus, for generic $(\rho, x)$, the matrix has rank $2\ell+1$.
In particular, the system has a unique solution for the entries of the Fourier coefficient matrix
$\hrho(H_\ell)$. This completes the proof of Lemma~\ref{lem:inductive} and with it Theorem~\ref{thm:main}.
\end{proof}

\subsection{Proof of Theorem~\ref{thm:inplane}}
\label{sec:thm_inplane}

The proof of Theorem~\ref{thm:inplane} has a similar structure to the proof of Theorem~\ref{thm:main}. However, because the Fourier coefficients of an in-plane uniform distribution all have rank one, we cannot use the first moment to solve for the unknown matrices $X_\ell$ if we already know the Fourier coefficient
$\hrho(H_\ell)$. Because $\hat{\rho}^{\ell}_m[m'] = 0$ if $m' \neq 0$, the only non-zero terms in the second moment have the form 
\[M_2(\rho, x;H_i[0], H_{j}[s_2], H_k[s_3]),\]
for varying values of $i,j,k \in [0,L]$ and $s_2, s_3 \in [1,R]$. \\

\paragraph{{\bf Base case.}} The base case is essentially the same as in the proof Theorem~\ref{thm:main}. Because we have at least three shells, we may assume that the matrix $X_1$ has rank at least three and can therefore solve for the three non-zero entries in $\hrho(H_1)$ from the first moment $\hrho(H_1) X_1$.\\

\paragraph{{\bf Induction step.}}
Let $\ell > 1$ and suppose all terms of the distribution and signal have been determined up to band $\ell-1$, i.e., $\{\hrho^i_k[0]\}$ and $\{x^i_k[s]\}$ are fixed for $1 \leq i \leq \ell-1$, $-i\leq k \leq i$
and $1 \leq s \leq R$.\\

\noindent{\textbf{Step 1: The case $\ell = 2$.}} 
Let $\hat{\rho}(H_1) \in \mathbb{R}^{3 \times 3}$ and $X_1 \in \mathbb{R}^{3 \times R}$ be fixed. We first solve for the distribution term, $\hat{\rho}(H_2)$. We use the equations $\{M_2(H_2[0], H_1[s_2], H_1[s_3])\}$ with indices taking values in $1 \leq s_2 \leq R$, $s_3 = 1$ and $s_2 = 1$, $2 \leq s_3 \leq R$. This yields $2R - 1$ equations, which is at least $\dim(H_2) = 5$ when $R \geq 3$.

Now we solve for the signal component $X_2$. For fixed $1 \leq s_3 \leq R$, the $R$ equations
$\{M_2(\rho,x;H_{1}[0], H_1[s_2], H_2[s_3])\}$ with $1 \leq s_2 \leq R$
and the equation 
$\{M_2(\rho,x;H_{2}[0], H_1[1], H_2[s_3])\}$
involve algebraically independent coefficients. Therefore, we
obtain $R+1\geq \dim(H_2)=5$ (when $R\geq 4$) independent equations, which allows us to 
solve for $x^2[s_3]$.

\noindent{\textbf{Step 2: The case $\ell > 2$.}}
Fix $1 \leq s_3 \leq R$. We solve for the $2\ell+1$ unknowns $\{x^\ell_k[s_3]\}_{k=-\ell}^{\ell}$ using invariants of the form $M_2(H_i[0], H_{\ell-i}[s_2], H_\ell[s_3])$ for $1 \leq i \leq \ell-1$ and $1 \leq s_2 \leq R$.

For fixed $i$ and  $s_2$, the second moment equation~\eqref{2ndmomequation}
\[
M_2(\rho, x; H_i[0], H_{\ell-i}[s_2], H_\ell[s_3]) = 
\sum_{\substack{k_2 + k_3 = k_1 \\ |k_1| \le i, |k_2| \le (\ell-i), |k_3| \le i}}
\langle (\ell-i) \, k_2\; \ell\, k_3 \mid i\, k_1 \rangle \,
\overline{\hat{\rho}^{i}_{k_1}[0]} \, x^{\ell-i}_{k_2}[s_2] \, x^{\ell}_{k_3}[s_3].\]
gives a linear equation in the unknowns $x^\ell_k[s_3]$, with coefficients 
\[C(\overline{\hrho^i[0]}, x^{\ell-i}[s_2])_k = 
\sum_{\substack{k_1 - k_2 = k}} \langle (\ell-i)\, k_1\; \ell\, k \mid i\, k_1 \rangle \, 
\overline{\hrho^i_{k_1}[0]} x^{\ell-i}_{k_2}[s_2]
\] that are bilinear forms in the variables $y^{i,\ell-i}_{s,t}[s_2] \coloneq \overline{\hrho^i_{s}[0]} \, x^{\ell-i}_t[s_2]$.

Let 
\[
S[R] = \{(i, s_2) : 1 \leq i \leq \ell-1, \, 1 \leq s_2 \leq R\} \subset [1, \ell-1] \times [1, R].
\]
Taking $(i, s_2) \in S[R]$ yields $|S[R]| = (\ell-1) \cdot R$ linear equations. For distinct pairs $(i, s_2), (i', s_2') \in S[R]$, there are no algebraic relations between the corresponding sets of monomials $\{y^{i,\ell-i}_{s,t}[s_2]\}_{s,t}$ and $\{y^{i',\ell-i'}_{s,t}[s_2']\}_{s,t}$: if $i \neq i'$ the variables lie in different bands, and if $i = i'$ but $s_2 \neq s_2'$ the variables lie in different shells. Thus, we may treat them as distinct sets of variables.

Using the shorthand $C(\{y^{i,\ell-i}_{s,t}[s_2]\})_k$ for the coefficient 
$C(\overline{\hrho^i[0]}, x^{\ell-i}[s_2])_k$
the coefficient matrix is
\begin{equation}
\label{eq:coeff_matrix}
M = \begin{pmatrix}
C(\{y^{1,\ell-1}_{s,t}[1]\})_{-\ell} & \cdots & C(\{y^{1,\ell-1}_{s,t}[1]\})_{\ell} \\
\vdots & & \vdots \\
C(\{y^{1,\ell-1}_{s,t}[R]\})_{-\ell} & \cdots & C(\{y^{1,L-1}_{s,t}[R]\})_{\ell} \\
C(\{y^{2,L-2}_{s,t}[1]\})_{-\ell} & \cdots & C(\{y^{2,L-2}_{s,t}[1]\})_{\ell} \\
\vdots & & \vdots \\
C(\{y^{L-1,1}_{s,t}[R]\})_{-\ell} & \cdots & C(\{y^{L-1,1}_{s,t}[R]\})_{\ell}
\end{pmatrix}.
\end{equation}
Since the forms $C(\{y^{i,\ell-i}_{s,t}[s_2]\})_k$ for $k = -\ell, \ldots, \ell$ are linearly independent for each fixed $(i, s_2)$, the matrix $M$ has full rank over the field of rational functions. Thus, for generic $(\rho, x)$, the matrix has rank $\min\left(\ell-1) \cdot R, 2\ell+1\right)$.
The system has a unique solution when $(\ell-1) \cdot R \geq 2\ell + 1$. For $R = 4$, this gives $4(\ell-1) = 4\ell - 4 \geq 2\ell + 1$, which holds when $\ell \geq 3$.

\medskip
\noindent\textbf{Recovery of the distribution band $\ell$.}
By Lemma~\ref{lem:inductive},  if the signal coefficients are generic,  then the distribution
coefficients $\hrho^{\ell}_m[0]$ are determined from the second moment invariants, 
$M_2(\rho, x; H_\ell[0], H_i[s_2], H_{\ell-i}[s_3])$ for $0 \leq i \leq \lfloor \ell/2 \rfloor$, provided that $R \geq 3$. Hence, for a generic distribution and signal $(\rho, x)$, we can recover $x$ up to the action of $\mathrm{SO}(3)$ from the second moment.

\section{Algorithm and Numerical experiments}\label{sec:numerics}
In this section, we first explicitly introduce the algorithm derived from the proofs in Section~\ref{sec:signal_recovery}. Then, we empirically  validate the theoretical results and  demonstrate that our frequency marching algorithm, based on the second moment, successfully recovers the signal when the  distribution is non-uniform. Furthermore, we compare its performance against the third-moment method proposed in~\cite{bendory2025orbit}, analyzing the robustness of both estimators with respect to noise, sample complexity, and the degree of nonuniformity in the data. 

\subsection{Signal model}
\label{sec:setting}
In the numerical part, we model the signal as
\begin{equation}
\label{eq:3-d-fourier-bessel}
x(r, \theta, \varphi) = \sum_{\ell=0}^{L} \sum_{m=-\ell}^{\ell} \sum_{s=1}^{R} x_m^{\ell}[s] \, Y_\ell^m(\theta, \varphi) \, j_{\ell, s}(r).
\end{equation}
This model is similar to~\eqref{eq:signal}, but the function is not discretized in the radial direction. Specifically, the radial direction is described by the normalized spherical Bessel functions $j_{\ell, s}(r)$ defined as $j_{\ell, s}(r) = \frac{\sqrt{2}}{\lvert j_{\ell + 1}(u_{\ell, s}) \rvert} \, j_\ell(u_{\ell, s} r)$, where $j_\ell$ is the spherical Bessel function of the first kind of order $\ell$, and $u_{\ell, s}$ denotes its $s$-th positive zero. The number of radial components is limited to a fixed number of shells~$R$,
and such a function can also be represented as an element in the finite-dimensional $\SO(3)$-representation $V_L = \bigoplus_{\ell=0}^L H_\ell^R$. This representation is chosen for its superior numerical stability compared to the shell model~\cite{kileel2025fast}, and the theoretical results established above follow directly in this setting.
 The representation of the distribution is as in~\eqref{eq:distribution}. 

\subsection{Algorithm}
\label{sec:algorithm} 
The algorithm for signal recovery from the second moment is induced by our proofs in Section~\ref{sec:signal_recovery}, with slight changes that are explained below.
It is based on a frequency marching mechanism, where the $\ell$-th frequency of the signal and the distribution is recovered linearly based on the knowledge of lower frequencies. A schematic illustration of this alternating procedure is given in Figure~\ref{fig:band-ladder}.
We now formulate the algorithm explicitly. 

The algorithm begins by estimating the population moments from the empirical moments. While the first moment is unbiased and can therefore be estimated directly, the second moment includes a bias term. However, this bias affects only a small portion of the second moment, specifically, the component that is invariant under $\SO(3)$.

\begin{lemma}\label{prop:unbiased}
Let $\{y_i=g_ix+\varepsilon_i\}_{i=1}^n\subset (L^2(S^2)_L)^R$ be a collection of $n$ noisy, randomly rotated copies of the signal $x$. Let $y_i^{\ell}[s]\in H_{\ell}$ denote the components of the sample in the s-th radial shell of the $\ell$-th band. For $\ell_1 \ge 1$, we have 
\[
\frac{1}{n} \sum_{i=1}^n (y_i^{\ell_2}[s_2] \otimes y_i^{\ell_3}[s_3])_{\ell_1} 
\;\xrightarrow{n \to \infty}\; 
\hat{\rho}(H_{\ell_1})^* (x^{\ell_2}[s_2] \otimes x^{\ell_3}[s_3])_{\ell_1}.
\]
For $\ell_1 = 0$, we obtain
\[
\frac{1}{n} \sum_{i=1}^n (y_i^{\ell}[s] \otimes y_i^{\ell}[s])_0 
\;\xrightarrow{n \to \infty}\; 
x^\ell[s]x^\ell[s]^* + \sigma^2.
\]
\end{lemma}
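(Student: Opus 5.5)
The plan is to expand each empirical tensor product bilinearly, apply the strong law of large numbers term by term, and identify each limit through the Fourier-theoretic description of the moments in Section~\ref{sec:moments}. Writing $y_i^{\ell}[s] = (g_i x)^{\ell}[s] + \varepsilon_i^{\ell}[s]$ and using bilinearity of the projection $(\cdot\otimes\cdot)_{\ell_1}$ from Lemma~\ref{projection}, we split
\[
\frac{1}{n}\sum_{i=1}^n (y_i^{\ell_2}[s_2]\otimes y_i^{\ell_3}[s_3])_{\ell_1}
\]
into four averages: signal--signal, two signal--noise cross terms, and noise--noise. Each summand is i.i.d.\ in $i$ with finite mean, since the signal terms are bounded and the noise is Gaussian. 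The strong law of large numbers therefore gives almost sure convergence of each average to its expectation, and it remains to compute these four expectations.

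\textbf{Signal--signal term.} Its expectation is $\int_G \rho(g)\,\bigl((gx)^{\ell_2}[s_2]\otimes (gx)^{\ell_3}[s_3]\bigr)_{\ell_1}\,dg$. The projection onto the $H_{\ell_1}$-isotypic component is $G$-equivariant, so $(gx^{\ell_2}\otimes gx^{\ell_3})_{\ell_1} = \pi_{\ell_1}(g)\,(x^{\ell_2}\otimes x^{\ell_3})_{\ell_1}$. Integrating against $\rho$ produces the operator $\int_G \rho(g)\pi_{\ell_1}(g)\,dg$ acting on $(x^{\ell_2}[s_2]\otimes x^{\ell_3}[s_3])_{\ell_1}$, and by \eqref{eq.fourier} this operator is $\hrho(H_{\ell_1})^*$. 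This is exactly the decomposition \eqref{eq.secondmomentdecomp}.

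\textbf{Cross terms.} $\varepsilon_i$ has mean zero and is independent of $g_i$, so conditioning on $g_i$ shows that both cross terms have zero expectation.

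\textbf{Noise--noise term.} This term has expectation $\bigl(\mathbb{E}[\varepsilon^{\ell_2}[s_2]\otimes\varepsilon^{\ell_3}[s_3]]\bigr)_{\ell_1}$. The noise is i.i.d.\ with variance $\sigma^2$ across coefficients, so this covariance is $\sigma^2\delta_{\ell_2\ell_3}\delta_{s_2s_3}$ times the identity element of $H_{\ell}\otimes H_{\ell}^*$. The identity is $\SO(3)$-invariant, because $\pi_\ell(g)\,I\,\pi_\ell(g)^*=I$ by unitarity. Hence it lies entirely in the trivial isotypic component: its projection onto $H_{\ell_1}$ vanishes for $\ell_1\geq 1$, which proves the first claim. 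For $\ell_1=0$, $\ell_2=\ell_3=\ell$, $s_2=s_3=s$, the invariant pairing of $x^\ell[s]$ with itself is $x^\ell[s]^*x^\ell[s]$ (written $x^\ell[s]x^\ell[s]^*$ in the statement). With the normalization implicit in the statement, the noise contributes $\sigma^2$, and $\hrho(\mathbf 1)=1$ removes the distribution factor, which proves the second claim.

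The main obstacle is the bookkeeping for the noise covariance. The statement identifies $H_\ell\otimes H_\ell$ with matrices via the conjugate (dual) pairing, which is the convention under which the invariant is $x^*x$. We must also fix the precise normalization of the $\ell_1=0$ projection so that the bias is exactly $\sigma^2$ rather than $(2\ell+1)\sigma^2$ or $\sigma^2\sqrt{2\ell+1}$. Both points depend on the CG normalization $\langle \ell\,k\;\ell\,{-k}\mid 0\,0\rangle = (-1)^{\ell-k}/\sqrt{2\ell+1}$ and on how the noise is modeled in the spherical-harmonic basis (real versus complex coefficients). I would settle this by computing $\sum_k \langle \ell\,k\;\ell\,{-k}\mid 00\rangle\,\mathbb{E}[\varepsilon_k\varepsilon_{-k}]$ directly and then adopting whatever normalization the statement intends. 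The vanishing for $\ell_1\ge1$ needs no such care, since it rests only on invariance.
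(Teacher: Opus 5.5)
Your proposal is correct and follows the same route as the paper. In both, the noise contributes an $\SO(3)$-invariant multiple of the identity to the diagonal blocks $H_\ell\otimes H_\ell$; this lies entirely in the trivial summand and is therefore annihilated by projection onto $H_{\ell_1}$ for $\ell_1\ge 1$. You additionally spell out three steps the paper leaves implicit: the law-of-large-numbers step, the vanishing cross terms, and the identification of the signal term with $\hrho(H_{\ell_1})^*(x^{\ell_2}[s_2]\otimes x^{\ell_3}[s_3])_{\ell_1}$ via equivariance of the projection. Your caution about the normalization of the $\ell_1=0$ bias is also justified, since the paper does not pin it down.
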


\begin{proof}
The noise $\varepsilon_i$ contributes $\sigma^2 I_{2\ell+1}$ to diagonal blocks $H_\ell \otimes H_\ell$ of the second moment. Under the decomposition $H_\ell \otimes H_\ell = \bigoplus_{k=0}^{2\ell} H_k$, the identity $I_{2\ell+1}$ lies entirely in the trivial summand $H_0$. Projection onto $H_k$ for $k \ge 1$ therefore annihilates the bias.
\end{proof}

Hence, the empirical second moment $\frac{1}{n}\sum_{i=1}^n y_i \otimes y_i$ is biased by $\sigma^2 I$ due to the noise, and we therefore debias the estimator as follows:
\begin{equation}
    \label{eq:debias}
    \hat{M}_2 = \frac{1}{n}\sum_{i=1}^n y_i \otimes y_i - \sigma^2 I.
\end{equation}
The empirical first moment is simply given by
\begin{equation}
    \hat{M}_1 = \frac{1}{n}\sum_{i=1}^n y_i.
\end{equation}

Next, we aim to recover the signal and distribution coefficients. First, the signal coefficients $x_0^0[s]$, $1 \le s \le R$, are recovered directly from the $\ell=0$ component of the first moment~\eqref{eq.firstmomentdecomp}, since $\hat{\rho}(H_0) = 1$:
\begin{equation}
    \label{eq:x_0}
    \hat{x}_0^0[s] = {(\hat{M}_1)_0}_{(0, s)}.
\end{equation}
The coefficients~$x_m^1[s], 1 \le s \le R, m \in \{-1, 0, 1\}$, are determined up to $\Orth(3)$ from the Gram matrix whose $(s,s')$ entry is 
\begin{equation}
    \label{eq:gram_1}
    \sum_{m=-1}^{1} x_m^1[s]\, \left({x_m^1}[s']\right)^*,
\end{equation}
which is obtained from the uniform second moment; see the base case of the proof of Theorem~\ref{thm:main}. The coefficients are recovered up to a global rotation and reflection by computing a matrix square root, e.g., via eigendecomposition. We can eliminate the reflection by utilizing one additional equation from the second moment. Then, from the second moment equations~\eqref{2ndmomequation}, we proceed:
\begin{enumerate}
    \item When the signal coefficients $x$ are fixed, \eqref{2ndmomequation} is linear in the distribution coefficients~$\rho$.
    \item When the distribution coefficients $\rho$ are fixed,~\eqref{2ndmomequation} is linear in the signal coefficients of the highest band (when solving progressively).
\end{enumerate}
We note that although the proof of Theorem~\ref{thm:main} uses the first moment to recover signal coefficients, in practice we solve for both signal and distribution coefficients directly from the second moment equations~\eqref{2ndmomequation}, which subsumes the first moment information and yields better numerical performance.

\textbf{Distribution recovery.} Assuming signal coefficients for bands $0, \dots, \ell-1$ are known, we recover the distribution Fourier coefficients at band $\ell$ by solving $(2\ell+1)$ independent linear systems, one for each column $s_1 \in \{-\ell, \dots, \ell\}$ of the distribution Fourier coefficient matrix $\hat{\rho}(H_\ell)$. For each column index $s_1$, equation~\eqref{2ndmomequation} yields a linear system:
\begin{equation}
\label{eq:linear_rho}
A_{\rho}^{(s_1)}(x_{0 \dots \ell-1}) \, \overline{\hat{\rho}_{\ell}[s_1]} = b_{\rho}^{(s_1)}(\hat{M}_2),
\end{equation}
where $\hat{\rho}_{\ell}[s_1] \in \mathbb{C}^{2\ell+1}$ is the $s_1$-th column of $\hat{\rho}(H_\ell)$. The matrix $A_\rho^{(s_1)}$ has $(2\ell+1)$ columns indexed by $m_1 \in \{-\ell, \ldots, \ell\}$, and rows indexed by quadruples $(\ell_2, \ell_3, s_2, s_3)$ where $|\ell - \ell_2| \le \ell_3 \le \min(\ell + \ell_2, \ell)$ and $s_2, s_3 \in \{1, \ldots, R\}$. Each row has entries:
\begin{equation}
    \label{eq:A_rho_entry}
    {A_\rho^{(s_1)}}_{(\ell_2, \ell_3, s_2, s_3), m_1} = \sum_{\substack{m_2+m_3 = m_1 \\ |m_i| \le \ell_i}}  \, \langle \ell_2 \, m_2 \, \ell_3 \,m_3 | \ell \,m_1 \,  \rangle \, x^{\ell_2}_{m_2}[s_2] \, x^{\ell_3}_{m_3}[s_3],
\end{equation}
for $m_1 \in \{-\ell, \ldots, \ell\}$, where $x^{\ell_2}_{m_2}[s_2]$ and $x^{\ell_3}_{m_3}[s_3]$ denote known signal coefficients from lower bands. The right-hand side vector is the empirical second moment components for bands $(\ell, \ell_2, \ell_3)$ and shells $(s_1, s_2, s_3)$:
\begin{equation}
    \label{eq:b_rho}
    [b_\rho^{(s_1)}]_{(\ell_2, \ell_3, s_2, s_3)} = [\hat{M}_2]_{\ell, \ell_2, \ell_3}(s_1, s_2, s_3).
\end{equation}

\textbf{Signal recovery.} Conversely, if the distribution Fourier coefficients are known up to band $\ell$ and the signal coefficients are known up to band $\ell-1$, we recover the signal coefficients at band $\ell$ by solving $R$ independent linear systems, one for each shell $s \in \{1, \dots, R\}$. For each shell $s$, equation~\eqref{2ndmomequation} yields:
\begin{equation}
\label{eq:linear_x}
A_{x}^{(s)}(\hat{\rho}_{0 \dots \ell}, x_{0 \dots \ell-1}) \, x_{\ell}[s] = b_{x}^{(s)}(\hat{M}_2),
\end{equation}
where $x_{\ell}[s] = (x^\ell_{-\ell}[s], \ldots, x^\ell_{\ell}[s])^\top$ 
is the $s$-th column of the signal matrix $X_\ell$. The matrix $A_x^{(s)}$ has $(2\ell+1)$ columns indexed by $m \in \{-\ell, \ldots, \ell\}$, and rows indexed by sextuples $(\ell_1, \ell_2, \ell_3, s_1, s_2, s_3)$ satisfying the following constraints: $0 \le \ell_1 < \ell$; $|\ell_1 - \ell_2| \le \ell_3 \le \min(\ell_1 + \ell_2, L)$ (triangle inequality); exactly one of $\{\ell_2, \ell_3\}$ equals $\ell$ while the other is less than $\ell$; $s_1 \in \{-\ell_1, \ldots,\ell_1\}$ and $s_2, s_3 \in \{1, \ldots, R\}$; and crucially, the shell index corresponding to band $\ell$ equals $s$ (i.e., $s_2 = s$ if $\ell_2 = \ell$, or $s_3 = s$ if $\ell_3 = \ell$). Without loss of generality, assume $\ell_2 = \ell$ and $s_2 = s$. Then each row has entries:
\begin{equation}
\label{eq:A_x_entry}
{A_x^{(s)}}_{(\ell_1, \ell_2, \ell_3, s_1, s_2, s_3), m_2} = \sum_{\substack{m_1+m_3 = m_2 \\ |m_i| \le \ell_i}}  \, \langle \ell_1 \, m_1 \, \ell_3 \,m_3 | \ell \,m_2 \,  \rangle \, \overline{\hat{\rho}^{\ell_1}_{m_1}[s_1]} \, x^{\ell_3}_{m_3}[s_3],
\end{equation}
for $m_2 \in \{-\ell, \ldots, \ell\}$, where $\hat{\rho}^{\ell_1}_{m_1}[s_1]$ denote known distribution Fourier coefficients and $x^{\ell_3}_{m_3}[s_3]$ denote known signal coefficients from the lower band $\ell_3 < \ell$. The case where $\ell_3 = \ell$ and $s_3 = s$ (with $\ell_2 < \ell$) is constructed analogously. The right-hand side vector consists of the empirical second moment components:
\begin{equation}
    \label{eq:b_x}
    [b_x^{(s)}]_{(\ell_1, \ell_2, \ell_3, s_1, s_2, s_3)} = [\hat{M}_2]_{\ell_1, \ell_2, \ell_3}(s_1, s_2, s_3).
\end{equation}

A pseudo-code of the algorithm is described in Algorithm \ref{alg:frequency_marching_2nd}.

\begin{algorithm}
\caption{Joint recovery of volume and distribution from the second moment
}
\begin{algorithmic}[1]
\State \textbf{Inputs:} Noisy observations $\{y_i\}_{i=1}^n$, noise variance $\sigma^2$, number of shells~$R$, maximal frequency~$L$. 
\State \textbf{Outputs:} Signal $x$ and distribution $\rho$ up to $L$.
\Statex
\State Compute and debias the second moment according to \eqref{eq:debias}.
\State Initialize $\hat{\rho}_0 = 1$. 
\State Estimate~$x_0^0[s], 1 \le s \le R$ from the first moment according to~\eqref{eq:x_0}.
\State Estimate~$x_m^1[s], 1 \le s \le R, m \in \{-1, 0, 1\}$ by applying matrix decomposition to the Gram matrix defined in~\eqref{eq:gram_1}.
\For{$\ell = 1: L$}
    \For{$s_1 = 1 :2\ell+1$}
        \State Solve the system of equations~\eqref{eq:linear_rho}: $\widehat{\hat{\rho}}_{\ell}[s_1] \gets \arg \min_{\rho} \| A_{\rho}^{(s_1)} \hat{\rho} - b_{\rho}^{(s_1)} \|_2^2$, where~$A_{\rho}^{(s_1)}$ is given in~\eqref{eq:A_rho_entry} and~$b_\rho^{(s_1)}$ is given in~\eqref{eq:b_rho}.
    \EndFor
    \If{$\ell \ge 2$}
        \For{$s = 1 \to R$}
            \State Solve the system of equations~\eqref{eq:linear_x}: $\hat{x}_{\ell}[s] \gets \arg \min_{x} \| A_{x}^{(s)} x - b_{x}^{(s)} \|_2^2$, where~$A_{x}^{(s)}$ is given in~\eqref{eq:A_x_entry} and~$b_x^{(s)}$ is given in~\eqref{eq:b_x}.
        \EndFor
    \EndIf
\EndFor
\State \Return $x, \rho$
\end{algorithmic}
\label{alg:frequency_marching_2nd}
\end{algorithm}

\subsection{Experimental setup}
We generated synthetic data using two volumes: the Plasmodium falciparum 80S ribosome~\cite{wong2014cryo} available from the Electron Microscopy Data Bank (EMDB)\footnote{\url{https://www.ebi.ac.uk/emdb/}} under accession code~\mbox{EMD-2660}, and the TRPV1 structure~\cite{gao2016trpv1}, available as \mbox{EMD-8117}. The volumes were downsampled to $31^3$ voxels, and expanded into the truncated 3-D spherical-Bessel expansion of \eqref{eq:3-d-fourier-bessel} up to degree $L$ with $R$ radial shells;~$L$ and~$R$ vary between the experiments. 
All experiments were performed on a Linux server equipped with an Intel Xeon Gold 6252 CPU @ 2.10GHz and 1.5 TB of RAM. The algorithms were implemented in MATLAB. The code used to reproduce all numerical experiments is publicly available at

\begin{center}
\url{https://github.com/krshay/orbit-recovery-so3}.    
\end{center}

Our observations follow~\eqref{eq:samples}; the observations are generated in the coefficients domain. Unless stated otherwise, we simulate non-uniform viewing distributions by sampling the Euler angles~$(\alpha, \beta, \gamma)$ independently from a zero-mean Gaussian distribution with a standard deviation of~$1$. The signal-to-noise ratio (SNR) is defined by
\begin{equation}
    \text{SNR} = \frac{\| x\|_\text{F}}{\|\varepsilon\|_\text{F}},
\end{equation}
where~$x$ denotes the expansion coefficients,~$\varepsilon$ the additive noise in coefficient space, and~\mbox{$\|\cdot\|_{\text{F}}$} the Frobenius norm. The recovery error is computed as
\begin{equation}
\text{Relative Error} = \frac{\|x - \hat{x}\|_\text{F}}{\|x\|_\text{F}},
\end{equation}
where~$x$ denotes the ground-truth expansion coefficients and~$\hat{x}$ the recovered coefficients. We note that the global rotation ambiguity is resolved in our implementation by fixing the coefficients for $\ell=0$ and $\ell=1$ to their true values, rendering a search over $\SO(3)$ unnecessary.

\subsection{Recovery from a set of noisy measurements}
We begin by demonstrating a successful volume reconstruction from a set of $n=50{,}000$ noisy measurements with~$\text{SNR} = 1/2$; see Figures~\ref{fig:reconstruction_80S} and~\ref{fig:reconstruction_TRPV1}. These results use $L = 13$ and $R = 8$. The molecular visualizations were produced using UCSF Chimera~\cite{pettersen2004ucsf}. The code ran for approximately $16$ hours. The reconstruction error was~$8.1 \%$ for the Plasmodium falciparum 80S ribosome, and~$21\%$ for the TRPV1 volume.

\begin{figure}[htbp]
    \centering
    \begin{subfigure}{0.48\linewidth}
        \centering
        \includegraphics[width=\linewidth, keepaspectratio]{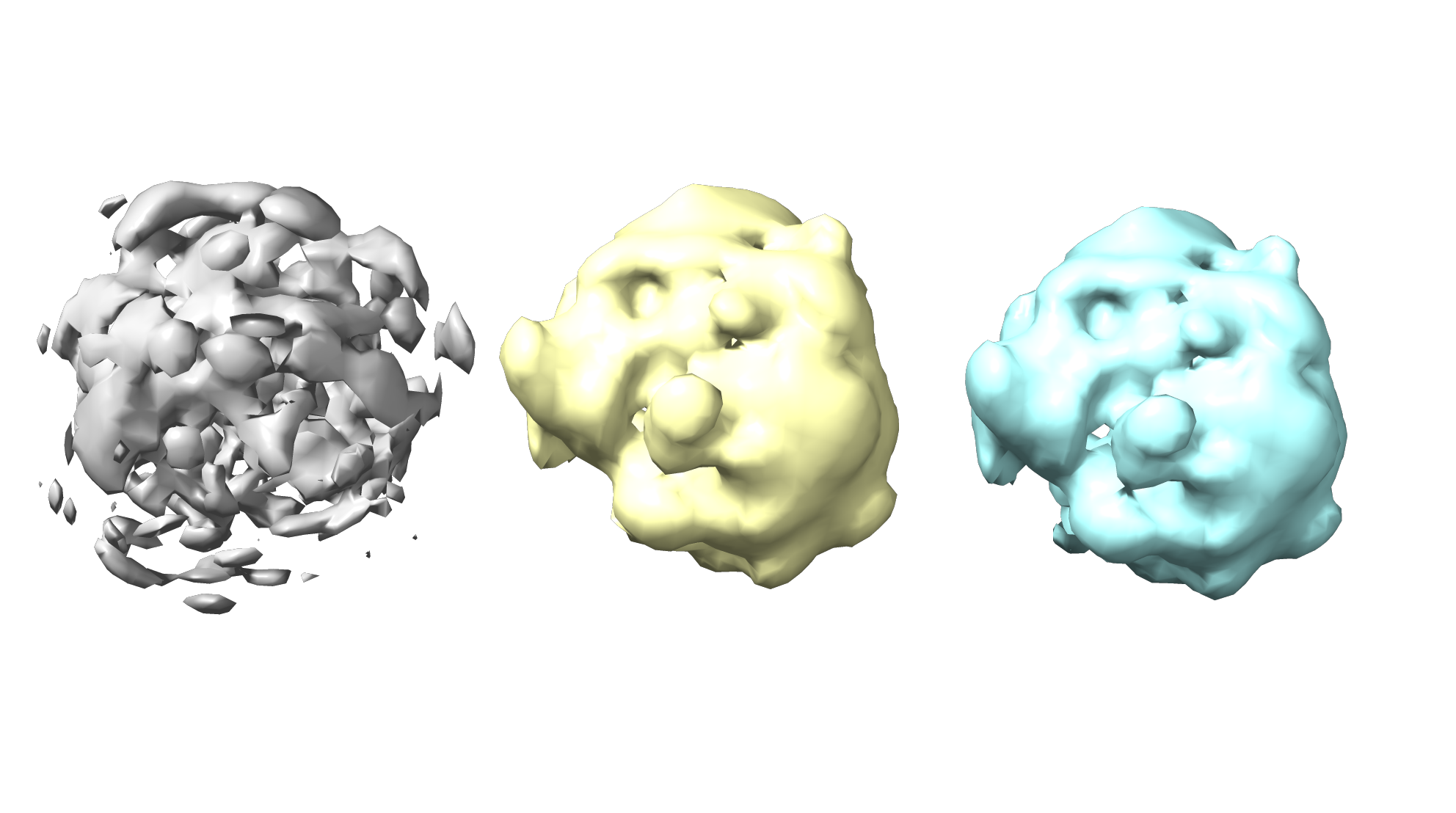}
        \caption{Plasmodium falciparum 80S ribosome}
        \label{fig:reconstruction_80S}
    \end{subfigure}
    \hfill 
    \begin{subfigure}{0.48\linewidth}
        \centering
        \includegraphics[width=\linewidth, keepaspectratio]{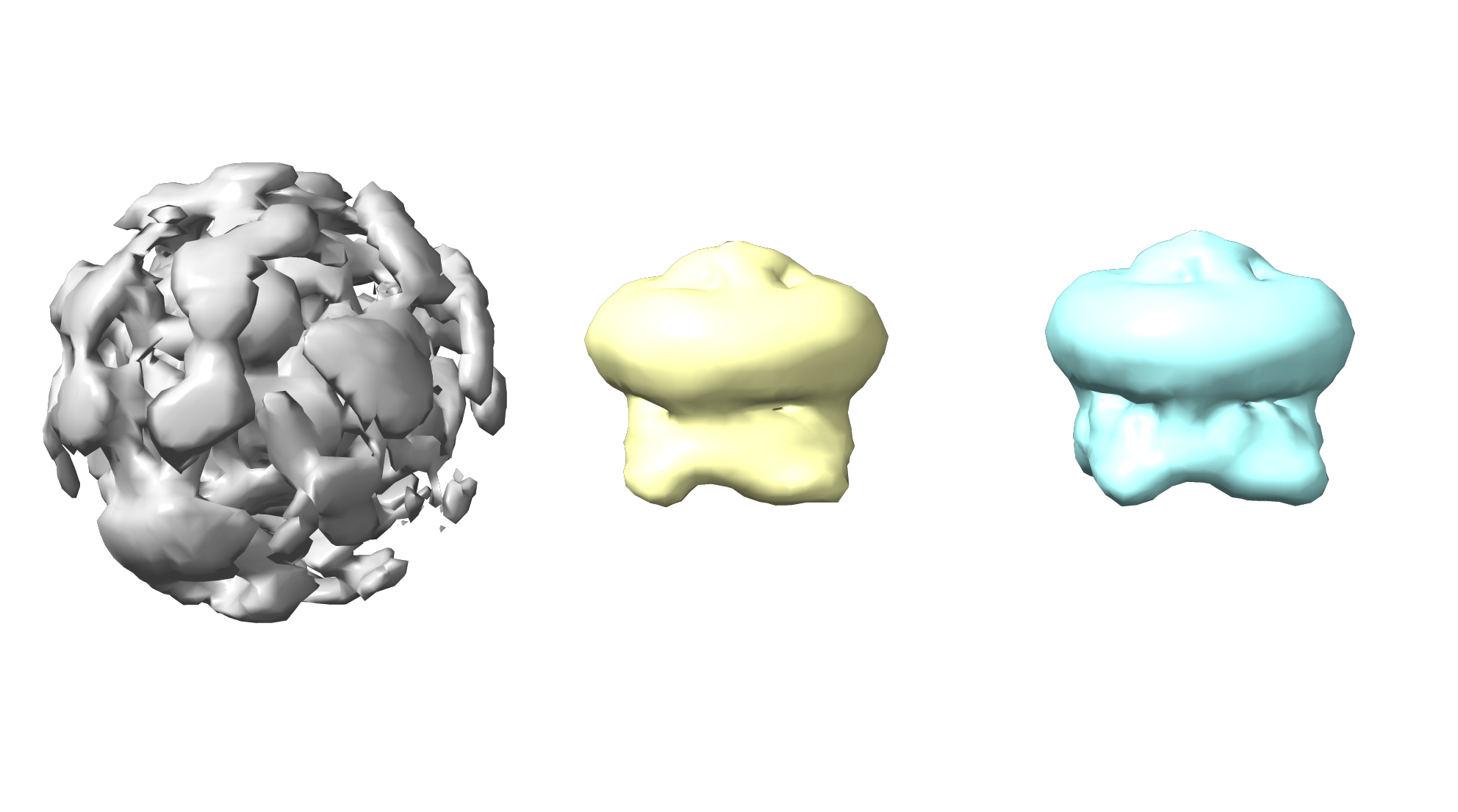}
        \caption{TRPV1 volume}
        \label{fig:reconstruction_TRPV1}
    \end{subfigure}
    
    \caption{Volume reconstruction results from $n=50{,}000$ noisy measurements with $\text{SNR} = 1/2$, using expansion parameters $L = 13$ and $R = 8$ shells. (a) Plasmodium falciparum 80S ribosome reconstruction. (b) TRPV1 volume reconstruction. In both panels, from left to right: a representative noisy measurement; the reconstructed volume; and the ground truth volume expanded to $L$.}
    \label{fig:reconstruction_combined}
\end{figure}

\begin{table}[htbp]
\centering
\caption{Condition numbers for signal and distribution bands for the reconstruction of the Plasmodium falciparum 80S ribosome.}
\label{tab:cond_comparison}
\begin{tabular}{c ccc ccc}
\toprule
 & \multicolumn{3}{c}{Signal Band Cond. Num.} & \multicolumn{3}{c}{Dist. Band Cond. Num.} \\
\cmidrule(lr){2-4} \cmidrule(lr){5-7}
$\ell$ & $R=3$ & $R=5$ & $R=8$ & $R=3$ & $R=5$ & $R=8$ \\
\midrule
1 & - & - & - & 5.8 & 5.9 & 5.4 \\
2 & 3.8 & 4.4 & 4.0 & 7.2 & 4.6 & 4.4 \\
3 & 3.8 & 3.5 & 3.2 & 8.7 & 5.7 & 4.7 \\
4 & 3.9 & 3.7 & 3.7 & 5.6 & 4.3 & 3.8 \\
5 & 4.3 & 4.1 & 4.2 & 7.7 & 6.5 & 5.7 \\
6 & 3.8 & 3.3 & 3.4 & 7.5 & 5.6 & 4.5 \\
7 & 4.4 & 4.0 & 3.7 & 8.5 & 6.9 & 6.2 \\
8 & 3.8 & 3.7 & 3.6 & 7.7 & 6.0 & 4.9 \\
9 & 4.1 & 4.1 & 4.0 & 7.9 & 5.8 & 4.9 \\
10 & 4.1 & 4.5 & 4.3 & 6.9 & 5.2 & 4.3 \\
11 & 3.5 & 3.8 & 4.3 & 5.9 & 4.0 & 3.8 \\
12 & 3.2 & 3.8 & 3.7 & 5.8 & 4.7 & 4.0 \\
13 & 3.2 & 3.5 & 3.7 & 6.5 & 4.9 & 4.4 \\
\bottomrule
\end{tabular}
\end{table}

\begin{table}[htbp]
\centering
\caption{Condition numbers for signal and distribution bands for the reconstruction of the TRPV1 volume.}
\label{tab:cond_comparison_trpv1}
\begin{tabular}{c ccc ccc}
\toprule
 & \multicolumn{3}{c}{Signal Band Cond. Num.} & \multicolumn{3}{c}{Dist. Band Cond. Num.} \\
\cmidrule(lr){2-4} \cmidrule(lr){5-7}
$\ell$ & $R=3$ & $R=5$ & $R=8$ & $R=3$ & $R=5$ & $R=8$ \\
\midrule
1 & - & - & - & 9.0 & 8.7 & 8.7 \\
2 & 5.5 & 6.6 & 6.6 & 31.5 & 23.2 & 21.2 \\
3 & 26.3 & 22.1 & 19.3 & 33.8 & 31.2 & 26.0 \\
4 & 17.4 & 12.3 & 11.8 & 40.9 & 28.5 & 21.7 \\
5 & 14.9 & 9.2 & 7.4 & 35.3 & 23.8 & 16.9 \\
6 & 11.3 & 9.4 & 9.1 & 27.3 & 27.1 & 19.7 \\
7 & 14.4 & 15.0 & 14.2 & 34.8 & 37.2 & 27.9 \\
8 & 11.3 & 12.5 & 13.4 & 22.2 & 19.6 & 16.0 \\
9 & 7.1 & 10.3 & 12.2 & 23.1 & 22.8 & 15.2 \\
10 & 3.9 & 5.6 & 6.7 & 27.3 & 20.2 & 14.7 \\
11 & 3.6 & 4.6 & 4.8 & 25.9 & 25.6 & 18.5 \\
12 & 3.3 & 4.0 & 4.3 & 26.3 & 20.6 & 17.6 \\
13 & 4.9 & 3.3 & 3.6 & 32.9 & 21.8 & 22.7 \\
\bottomrule
\end{tabular}
\end{table}

\subsection{Condition number analysis}
The proposed algorithm is based on successively solving systems of linear equations. Thus, the stability of the algorithm is based on the condition number of the linear systems. 
The key observation is that as $R$ increases, the number of rows in the distribution matrix grows as $O(R^2)$, while the number of columns remains fixed at $2L+1$. Likewise, the number of columns in the signal  matrix grows as $O(R)$. 
As a result, the condition number should decrease with more shells. This can be illustrated by the following simple result.



\begin{proposition}\label{prop:condition_heuristic}
Let $A \in \mathbb{R}^{m \times n}$ be a matrix with independent, identically distributed entries having mean zero and variance $\sigma^2$. Then the condition number $\kappa(A)$ satisfies
\[
\kappa(A) \to 1 \quad \text{almost surely as } m \to \infty \text{ with } n \text{ fixed}.
\]
\end{proposition}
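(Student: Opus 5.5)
The plan is to reduce the statement to the strong law of large numbers applied to the Gram matrix $\frac{1}{m}A^\top A$, with $n$ held fixed. Write $a_i^\top \in \R^{1\times n}$ for the $i$-th row of $A$. The rows are i.i.d. random vectors with $\mathbb{E}[a_i a_i^\top] = \sigma^2 I_n$, since the entries are independent with mean zero and variance $\sigma^2$. Then
\[
\frac{1}{m} A^\top A = \frac{1}{m}\sum_{i=1}^m a_i a_i^\top .
\]
Each of the $n^2$ entries of this matrix is an empirical average of i.i.d. integrable random variables. The variables $a_{ij}a_{ik}$ are integrable by Cauchy--Schwarz, since the entries have finite second moments. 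Applying the strong law of large numbers entrywise, and using that $n$ is fixed so there are only finitely many entries, gives $\frac{1}{m}A^\top A \to \sigma^2 I_n$ almost surely as $m\to\infty$, in any matrix norm.

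Next I would transfer this convergence to the singular values. Eigenvalues of symmetric matrices depend continuously on the entries; concretely, Weyl's inequality gives $|\lambda_k(B)-\lambda_k(C)| \le \|B-C\|_2$. So the eigenvalues of $\frac{1}{m}A^\top A$, which are the numbers $s_k(A)^2/m$, all converge almost surely to $\sigma^2$. Because $\sigma^2>0$, on this almost-sure event we have $s_{\min}(A)^2/m \to \sigma^2 > 0$. Hence $A$ has full column rank for all large $m$, and the ratio
\[
\kappa(A)^2 = \frac{s_{\max}(A)^2/m}{s_{\min}(A)^2/m} \to \frac{\sigma^2}{\sigma^2} = 1 ,
\]
so $\kappa(A)\to 1$ almost surely.

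There is no real obstacle in this argument. The only points to state carefully are the following:
\begin{itemize}
\item The statement should be read as a sequence of matrices $A_m$ built from the first $m$ rows of an infinite i.i.d. array, so that almost-sure convergence makes sense.
\item $\kappa$ is defined as $s_{\max}/s_{\min}$, which requires $s_{\min}>0$. The limit shows this holds eventually, almost surely.
\item The variance must satisfy $\sigma^2>0$. This is implicit in the hypothesis, since otherwise $A=0$.
\end{itemize}

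Only second moments are needed, because the strong law requires integrability of $a_{ij}a_{ik}$ and nothing more. No Bai--Yin type result is required, since the aspect ratio $n/m$ tends to $0$.
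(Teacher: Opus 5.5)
Your proposal is correct and follows the same route as the paper, which also applies the law of large numbers to $\frac{1}{m}A^\top A \to \sigma^2 I_n$ and concludes $\kappa(A)\to 1$. You fill in details the paper leaves implicit: the entrywise strong law with integrability of $a_{ij}a_{ik}$, the Weyl-inequality transfer to singular values, and the conventions ($\sigma^2>0$, nested matrices from one infinite array, $s_{\min}>0$ eventually).
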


\begin{proof}
By the law of large numbers, the Gram matrix satisfies $\frac{1}{m} A^T A \to \sigma^2 I_n$ almost surely implying $\kappa(A) \to 1$.
\end{proof}

Naturally, the entries of $M$~\eqref{eq:coeff_matrix} are not genuinely i.i.d.; rather, they are Clebsch-Gordan-weighted quadratic polynomials in the distribution and signal coefficients. Consequently, we cannot assert that the condition number approaches one as $R$ increases. Fortunately, our numerical experiments indicate that these systems are typically well conditioned. Indeed, since the equations remained well conditioned even when using the minimal number of radial shells, the dependence on $R$ appears to be relatively mild.

Tables~\ref{tab:cond_comparison} and~\ref{tab:cond_comparison_trpv1} report the condition numbers for the signal and distribution bands for the Plasmodium falciparum 80S ribosome and the TRPV1 volume, respectively. We evaluate these for varying radial resolutions $R \in \{3, 5, 8\}$, while fixing the maximum spherical harmonic degree at~$L=13$.
For the Plasmodium falciparum 80S ribosome, the condition numbers are low across all bands and shell counts. In the case of TRPV1, we observe higher condition numbers, particularly in the distribution band where values fluctuate between 20 and 40. This degradation in conditioning indicates that the inverse problem is more sensitive to noise for this structure, which is consistent with the higher reconstruction error observed for TRPV1 ($21 \%$) compared to the 80S ribosome ($8.1 \%$). However, despite the increased error, the condition numbers remain well below the threshold of numerical instability, allowing the algorithm to successfully recover the core structural features of the volume.

\subsection{Robustness to noise}
\begin{figure}[htbp]
\centering
	\includegraphics[width=0.7\columnwidth, keepaspectratio]{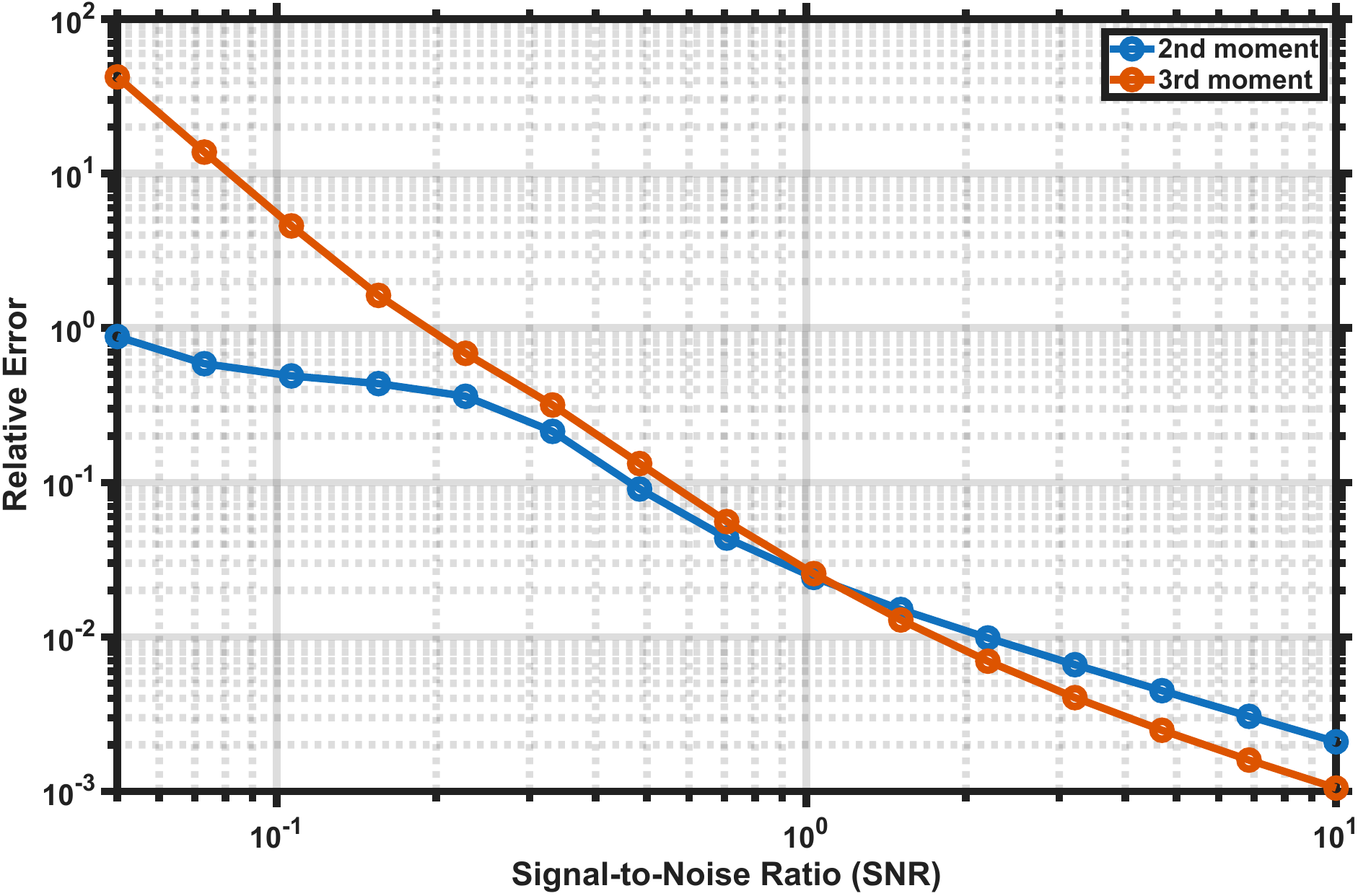}
	\caption{ 
    The average reconstruction error as a function of the SNR. The second-moment estimator is more robust in the high-noise regime ($\text{SNR} < 1$), while the third-moment estimator~\cite{bendory2025orbit} converges faster as the noise decreases.}
	\label{fig:error_vs_snr}
\end{figure}
We investigate the robustness of our algorithm against additive Gaussian noise, added to the observations. Figure~\ref{fig:error_vs_snr} illustrates the relative reconstruction error for different SNR levels. We compared the proposed algorithm (based on the  second moment) with the  algorithm  of~\cite{bendory2025orbit}; the latter algorithm uses the 
third moment and does not harness the non-uniform distribution of the rotations.  
For these experiments, the number of observations was fixed at $n=100{,}000$, and we used the Plasmodium falciparum 80S ribosome, expanded with $L=5$ and $R=5$. The reconstruction error was averaged over $5$ trials.

The results reveal two distinct regimes. In the low SNR regime ($\text{SNR} < 1$), the second-moment estimator significantly outperforms the third-moment method. This advantage is attributed to the fact that the third moment involves cubic terms of the data, which amplifies the noise more severely than the quadratic terms used in the second moment. Conversely, in the high SNR regime ($\text{SNR} > 1$), the third-moment estimator exhibits a steeper convergence rate and achieves a slightly lower reconstruction error. 

\subsection{Sample complexity}
We examine the reconstruction accuracy as a function of the sample size $n$. Figure~\ref{fig:error_vs_samples} displays the reconstruction error for the second-moment and third-moment methods as $n$ varies, with the SNR fixed at $1/2$. We used the Plasmodium falciparum 80S ribosome, expanded with~$L = 5$ and~$R = 5$. The reconstruction error was averaged over~$5$ trials.

\begin{figure}[htbp]
\centering
	\includegraphics[width=0.7\columnwidth, keepaspectratio]{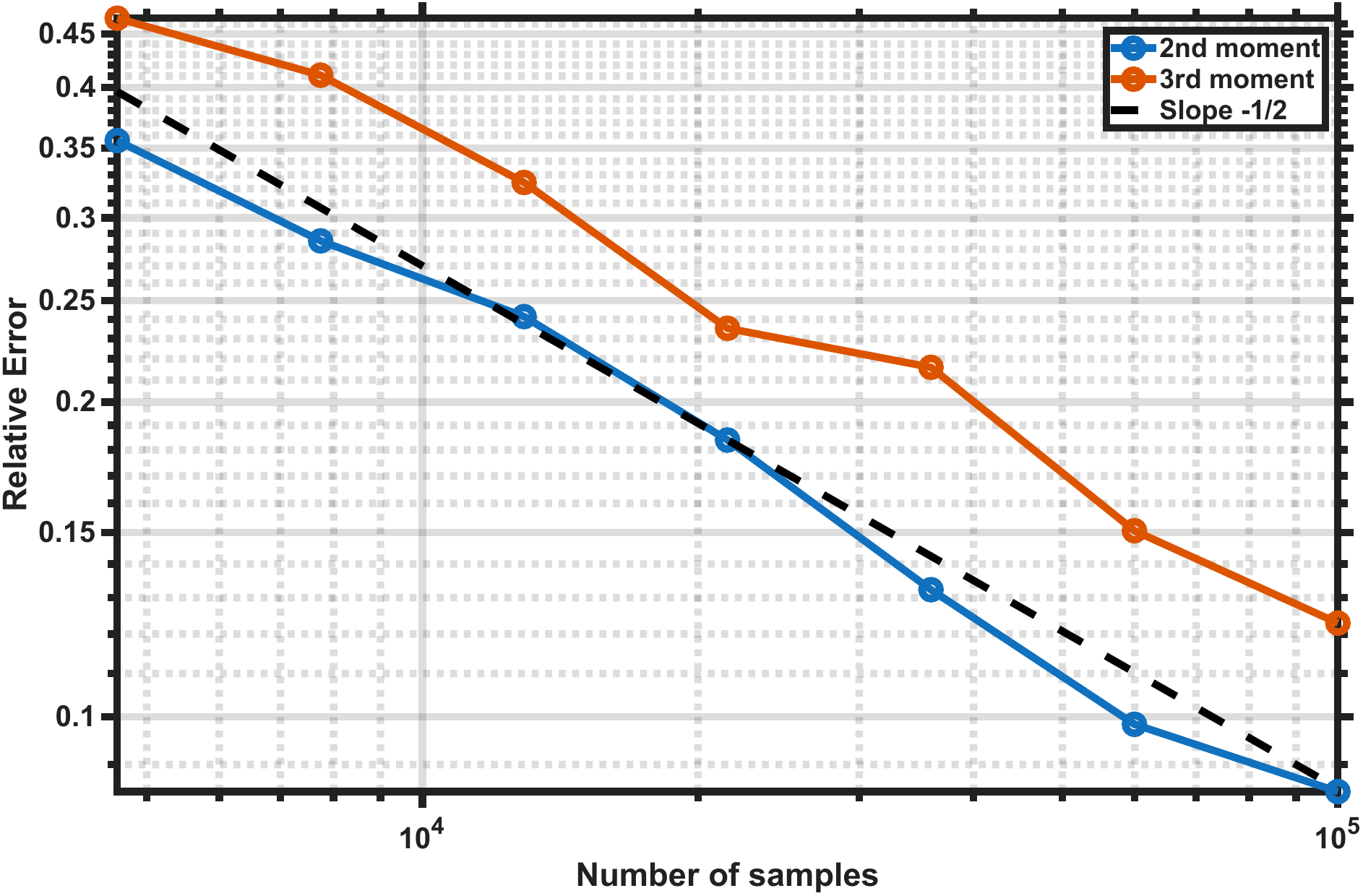}
\caption{The average reconstruction error as a function of the number of observations. The error decays at a rate of approximately $O(1/\sqrt{n})$ for both methods, with the second-moment estimator showing superior performance compared to the algorithm of~\cite{bendory2025orbit}.}
	\label{fig:error_vs_samples}
\end{figure}

Both estimators exhibit a convergence rate of approximately~$O(1/\sqrt{n})$. Notably, in this non-uniform experimental setting, the second-moment estimator consistently achieves lower reconstruction error than the third-moment method for a fixed sample size. This suggests that the second-moment method is more sample-efficient, at least in low-SNR environments, requiring fewer observations to achieve a target accuracy.

\subsection{Robustness to distribution nonuniformity}

Next, we evaluate the performance of the proposed algorithm for different distributions over $\SO(3)$. In our algorithm, we must estimate the Fourier matrices of the distribution from samples. 
Let $\{g_i\}_{i=1}^n$ be i.i.d. samples from the distribution with density $\rho\in L^2(G)$. Let $\varphi_V:G\rightarrow \Orth(V)$ be the representation of G on V. We estimate the Fourier matrix $\hat{\rho}(V)$ by
\begin{equation}
\hat{\rho}_n(V)=\frac{1}{n}\sum_{i=1}^n \varphi_V(g_i)^*.
\end{equation}
By the law of large numbers, as $n\to\infty$, we have $\hat{\rho}_n(V)\rightarrow \hat{\rho}(V)$.
 The variance of the estimator is determined by the following result:
 \begin{proposition} We have
 \begin{equation} 
 \mathrm{Var}(\hat{\rho}_n(V))=\frac{1}{n}(\dim(V)-||\hat{\rho}(V)||_F^2).   
 \end{equation}
 \end{proposition}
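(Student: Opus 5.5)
The variance here is read as the total variance $\mathbb{E}\|\hat{\rho}_n(V)-\hat{\rho}(V)\|_F^2$, i.e. the sum of the variances of all matrix entries. With that reading the statement reduces to computing the second moment of a single summand $\varphi_V(g)^*$ and then using independence.

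\textbf{Step 1: reduce to one sample.} Write $Z_i=\varphi_V(g_i)^*$. These are i.i.d. random matrices with $\mathbb{E}Z_i=\int_G\rho(g)\varphi_V(g)^*\,dg=\hat{\rho}(V)$ by the definition \eqref{eq.fourier}. The estimator $\hat{\rho}_n(V)$ is their average, and independence makes the cross terms vanish. Hence
\[
\mathrm{Var}(\hat{\rho}_n(V))=\frac{1}{n}\,\mathbb{E}\|Z_1-\hat{\rho}(V)\|_F^2 .
\]

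\textbf{Step 2: expand the single-sample term.} Expanding the Frobenius norm gives
\[
\mathbb{E}\|Z_1-\hat{\rho}(V)\|_F^2=\mathbb{E}\|Z_1\|_F^2-\|\hat{\rho}(V)\|_F^2 .
\]
This holds because $\mathbb{E}\langle Z_1,\hat{\rho}(V)\rangle_F=\langle\hat{\rho}(V),\hat{\rho}(V)\rangle_F$; the real parts of the cross terms combine into $-2\|\hat{\rho}(V)\|_F^2$, which then cancels against the $+\|\hat{\rho}(V)\|_F^2$ term.

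\textbf{Step 3: evaluate $\mathbb{E}\|Z_1\|_F^2$.} This is the only place the structure of $V$ enters. Since $\varphi_V$ is unitary (orthogonal in the statement), $\varphi_V(g)^*\varphi_V(g)=I$ for every $g$. Therefore
\[
\|Z_1\|_F^2=\Tr(I)=\dim V
\]
deterministically. Substituting into Step 2 and then Step 1 yields
\[
\mathrm{Var}(\hat{\rho}_n(V))=\frac{1}{n}\bigl(\dim V-\|\hat{\rho}(V)\|_F^2\bigr).
\]

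No real obstacle arises. The only care needed is in Step 2: the inner products must be handled correctly in the complex case, so that the cross terms are real and the expansion is valid.
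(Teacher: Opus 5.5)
Your proof is correct and follows the same route as the paper. You use independence to reduce to a single sample, which gives the factor $1/n$, then write the single-sample variance as $\mathbb{E}\Tr\bigl(\varphi_V(g)\varphi_V(g)^*\bigr)-\Tr\bigl(\hat{\rho}(V)\hat{\rho}(V)^*\bigr)$ and use unitarity to get $\dim V-\|\hat{\rho}(V)\|_F^2$; your explicit reading of $\mathrm{Var}$ as the total Frobenius variance is exactly what the paper's trace computation implicitly assumes.
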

\begin{proof}
    Note that 
    \begin{equation}
        \begin{split}
       \mathrm{Var}(\hat{\rho}_n(V)) =  \mathrm{Var} \left(\frac{1}{n}\sum_{i=1}^n \varphi_V(g_i)^*\right)   =\frac{1}{n} \mathrm{Var} \varphi_V(g_i)^*).
        \end{split}
    \end{equation}
    The proof is completed by:
    \begin{equation}
    \begin{split}
     \mathrm{Var} \varphi_V(g_i)&=\mathbb{E}\Tr\varphi_V(g_i)\varphi_V(g_i)^* - \Tr\hat{\varphi}(V)\hat{\varphi}(V)^* \\&= \dim (V)-\|\hat{\rho}(V)\|_F^2.   
    \end{split}
    \end{equation}
        \end{proof}





The key takeaway is that distributions closer to uniform, where 
$||\hat{\rho}(H_{\ell})||_F^2$ is small for $\ell\geq 1$, exhibit higher variance in the Fourier coefficient estimates, whereas more concentrated distributions yield more accurate estimates. This explains the empirical observation that our second-moment algorithm performs better for concentrated distributions. Since distribution variance is therefore a key driver of second-moment estimation accuracy, we next present two sets of experiments to evaluate the robustness of the algorithm with respect to angular coverage and distribution variance.

\subsubsection{Robustness to distribution's angular coverage}

\begin{figure}[htbp]
\centering
	\includegraphics[width=0.7\columnwidth, keepaspectratio]{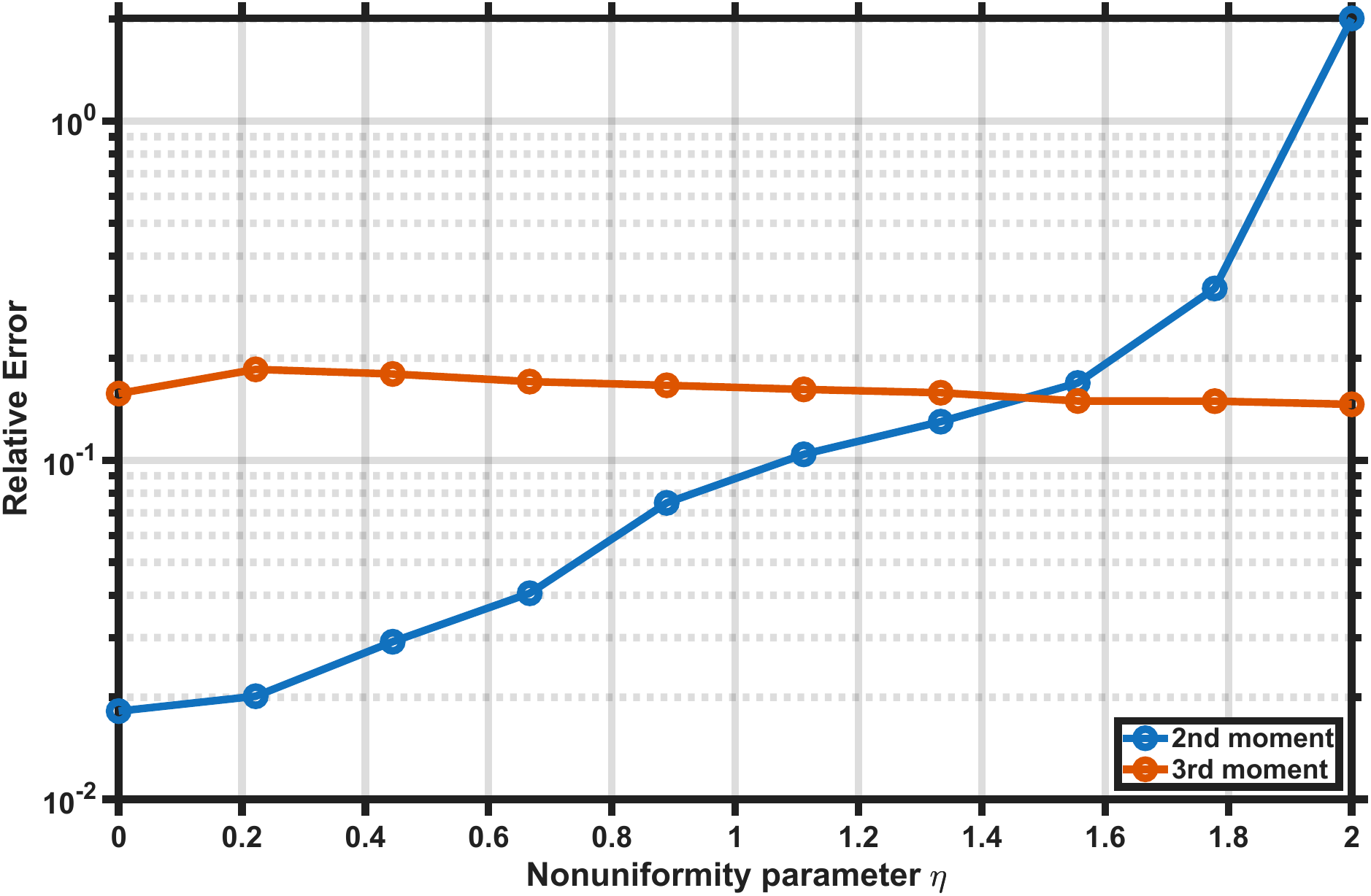}
	\caption{The average reconstruction error as a function of the nonuniformity parameter $\eta$, for the second-moment and third-moment methods. Small values of~$\eta$ (high non-uniformity) yield low errors for the second-moment method. As~$\eta \to 2$ (approaching uniformity), the second-moment method's error increases, whereas the third-moment method remains stable.
    }
	\label{fig:error_vs_coverage}
\end{figure}
We evaluate the reconstruction accuracy as a function of angular coverage by varying a parameter $\eta \in [0, 2]$ that scales the azimuthal range. Specifically, Euler angles are sampled as~$\alpha, \gamma \sim \text{Unif}[0, \eta\pi)$, while the polar angle is sampled to preserve the standard measure on the sphere:~$\beta = \cos^{-1}(1 - 2u)$ with~$u \sim \text{Unif}[0, 1)$. Thus, small $\eta$ corresponds to highly restricted views, while $\eta=2$ approaches a uniform distribution over $\SO(3)$. Figure~\ref{fig:error_vs_coverage} compares the second-moment method against the third-moment method using the Plasmodium falciparum 80S ribosome, with a volume expansion of~$L = 5$ and~$R = 5$, with~$n = 50{,}000$ observations and~$\text{SNR} = 1/2$. The reconstruction error was averaged over~$5$ trials. The second-moment estimator performs exceptionally well for restricted distributions~($\eta < 1$). However, as expected, its performance degrades as the distribution approaches uniformity~($\eta \to 2$). In contrast, the third-moment method (that does not harness the non-uniformity of the distribution) remains stable across all regimes, confirming that while higher moments are necessary for uniform distributions, the second moment is superior for retrieving signals from non-uniform rotation angles.

\subsubsection{Robustness to distribution's variance}
\begin{figure}[htbp]
\centering
	\includegraphics[width=0.7\columnwidth, keepaspectratio]{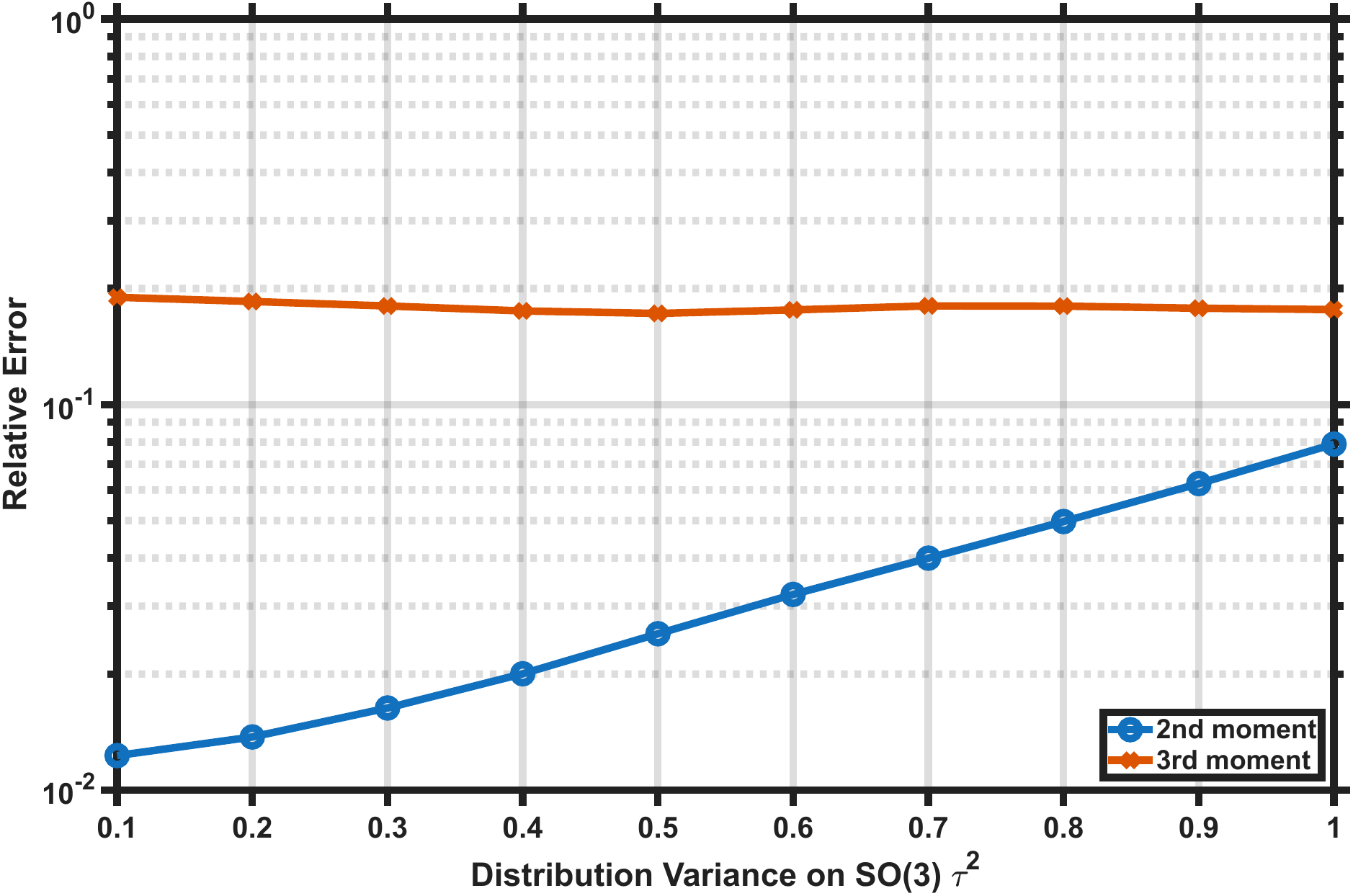}
	\caption{
    The average reconstruction error as a function of the non-uniformity parameter $\tau$ for the second-moment and third-moment methods. Small variance values of (representing high nonuniformity) yield low errors for the second-moment method. As the variance increases, the second-moment method's error increases, while the third-moment method remains mostly unaffected.}
	\label{fig:error_vs_variance}
\end{figure}
Next, we evaluate reconstruction accuracy as a function of the distribution variance on $\SO(3)$. We generated non-uniform viewing distributions by sampling Euler angles $(\alpha, \beta, \gamma)$ independently from a zero-mean Gaussian distribution with variance $\tau^2$, where $\tau^2$ varies. As $\tau$ decreases, the distribution becomes increasingly concentrated.

Figure~\ref{fig:error_vs_variance} compares the second-moment method against the third-moment method using the Plasmodium falciparum 80S ribosome, with a volume expansion of $L = 5$ and $R = 5$, with $n = 50,000$ observations and $\text{SNR} = 1/2$. The reconstruction error was averaged over 5 trials. As expected, the third-moment method remains stable across all distribution variances. The second-moment method outperforms the third-moment method for all distribution variances, although its performance degrades as the distribution becomes less concentrated (i.e., less nonuniform).

\section{The path to provable algorithms for cryo-EM?} \label{sec:cryo}
A major open question is whether similar techniques can recover a 3-D function (representing a molecular structure) from the second moment in the cryo-EM model. Unlike the setting considered here, cryo-EM involves a tomographic projection operator, which is not a group action. In particular, cryo-EM observations take the form
\begin{equation} \label{eq:cryo}
	y_i = P(g_i\cdot x) + \varepsilon_i, \quad i=1,\ldots,n,
\end{equation}
where $P$ denotes tomographic projection along the $z_3$-axis, i.e.,
$P(x(z_1,z_2))=\int_{z_3}x(z_1,z_2,z_3)dz_3$.
The distribution over rotations in cryo-EM is typically non-uniform~\cite{tan2017addressing,lyumkis2019challenges,barchet2026explicit}. Explicit expressions for the first and second moments of the cryo-EM model can be found in~\cite{sharon2020method,hoskins2024subspace}.

It is unclear whether the approach developed in this work---reducing moment inversion to a sequence of linear systems---extends to the cryo-EM setting. Nevertheless, alternative strategies may be possible. Below, we outline a prospective method that decomposes this  problem into an alternating sequence of linear subproblems.

The first moment of the cryo-EM model can be expressed as a convolution of the underlying structure with the distribution of rotations. If this distribution is known and sufficiently generic, then recovering the structure reduces to a linear inverse problem. While we are not aware of an established approach for directly estimating the rotation distribution, an initial estimate could be chosen at random, transferred from a related experiment, or learned by leveraging the large corpus of available reconstructed structures~\cite{iudin2023empiar,wwpdb2024emdb}.
Next, observe that the second moment depends linearly on the rotation distribution. Thus, given an estimate of the 3-D structure, the distribution can be refined by solving a linear system. This suggests an alternating scheme that iteratively updates the rotation distribution given the current structure estimate, and then updates the structure given the current distribution estimate. Determining, both numerically and analytically, when such a strategy succeeds remains an important open problem.

\section*{Acknowledgment}
T.B. and D.E. are supported in part by BSF under Grant 2020159. T.B. and N.S are supported in part by NSF-BSF under Grant 2024791. T.B is also supported in part by ISF under Grant 1924/21, and in part by a grant from The Center for AI and Data Science at Tel Aviv University (TAD). D.E was also partially supported by NSF grant DMS2205626. S.K. is supported by TAD Excellence Program for Doctoral Students in Artificial Intelligence and Data Science. N.S. is partially supported by the BSF award 2024266 and the DFG award 514588180.

\bibliographystyle{plain}

\end{document}